\def\tsc#1{\csdef{#1}{\textsc{\lowercase{#1}}\xspace}}
\newtheorem{theorem}{Theorem}
\newtheorem{theorem2}{Theorem}
\newtheorem{lemma}[theorem]{Lemma}
\newtheorem{lemma2}[theorem2]{Lemma}
\newtheorem{definition}{Definition}
\newtheorem{proposition}[theorem]{Proposition}
\newtheorem*{proof}{Proof}
\newtheorem{proof2}{Proof}
\useunder{\uline}{\ul}{}
\begin{document}
\let\WriteBookmarks\relax
\def\floatpagepagefraction{1}
\def\textpagefraction{.001}



\title{Debiased Model-based Interactive Recommendation}                      

\tnotetext[1]{This research was supported in part by  the National Key R\&D Program of China (2021ZD0111501), National Science Fund for Excellent Young Scholars (62122022), Natural Science Foundation of China (61876043, 61976052), the major key project of PCL (PCL2021A12).}


%
\author[1,4]{Zijian Li}

\ead{leizigin@gmail.com}

\affiliation[1]{organization=Guangdong University of Technology,
    addressline={School of Computer Science}, 
    city={Guangzhou City},
    postcode={510006}, 
    country={China}}

\affiliation[2]{organization=Shantou University,
    addressline={College of Science}, 
    city={Shantou City},
    postcode={515063}, 
    country={China}}

\affiliation[3]{organization=Huawei Technology,
    city={Shenzhen City},
    country={China}}

\affiliation[4]{organization={Mohamed bin Zayed University of Artificial Intelligence},
    city={Masdar City, Abu Dhabi},
    country={United Arab Emirates}}

\author[1]{Ruichu Cai*}
\ead{cairuichu@gmail.com}

\author[1]{Haiqin Huang}
\ead{huanghaiqin-hhq@gmail.com}

\author[1]{Sili Zhang}
\ead{huanghaiqin-hhq@gmail.com}

\author[1]{Yuguang Yan}
\ead{huanghaiqin-hhq@gmail.com}

\author[2]{Zhifeng Hao}
\ead{haozhifeng@stu.edu.cn}

\author[3]{Zhenhua Dong}
\ead{haozhifeng@stu.edu.cn}





\cortext[cor1]{Corresponding author}



\begin{abstract}
Existing model-based interactive recommendation systems are trained by querying a world model to capture the user preference, but learning the world model from historical logged data will easily suffer from bias issues such as popularity bias and sampling bias. This is why some debiased methods have been proposed recently. 
However, two essential drawbacks still remain: 1) ignoring the dynamics of the time-varying popularity results in a false reweighting of items. 2) taking the unknown samples as negative samples in negative sampling results in the sampling bias. To overcome these two drawbacks, we develop a model called \textbf{i}dentifiable \textbf{D}ebiased \textbf{M}odel-based \textbf{I}nteractive \textbf{R}ecommendation (\textbf{iDMIR} in short). In iDMIR, for the first drawback, we devise a debiased causal world model based on the causal mechanism of the time-varying recommendation generation process with identification guarantees; for the second drawback, we devise a debiased contrastive policy, which coincides with the debiased contrastive learning and avoids sampling bias. Moreover, we demonstrate that the proposed method not only outperforms several latest interactive recommendation algorithms but also enjoys diverse recommendation performance.
\end{abstract}



\begin{keyword}
Recommendation System \sep Causal Learning \sep Reinforcement Learning \sep Identification
\end{keyword}

\maketitle

\section{Introduction}
The interactive recommendation system \cite{zhou2020interactive,10.1145/3397271.3401174,9837877}, which is trained with the historical logged data,
\textcolor{black}{is} a special type of interactive information retrieval \cite{10.1145/3397271.3401424} and is naturally suitable for being \textcolor{black}{modeled} by reinforcement learning. One of the most predominant methods is the \textit{model-based method}, which employs a world model \cite{ha2018world} to simulate the user behaviors from the logged data, avoiding the intractable \textit{Deadly Triad} \cite{zhang2021breaking}, i.e. the instability and divergence under the offline \textcolor{black}{scenario}.

However, model-based interactive recommendation algorithms \cite{ie2019recsim,shi2019pyrecgym,zhao2019toward} usually suffer from biased estimation because the world model is trained on the historical logged data with different types of bias such as the popularity bias. As a result, a biased world model generates biased rewards for policy and consequently results in suboptimal recommendation results. 
Inspired by the power of causal effect in debiasing, recent advances \textcolor{black}{in} model-based interactive recommendation methods \cite{10.1145/3383313.3412252} leverage the technologies of causal effect like sample reweighting (e.g. reducing the weights of the popular items) or Inverse Propensity Scoring (IPS) to build a debiased world model.

Despite their general efficacy in building a debiased world model, these methods may be still constrained by two bottlenecks. First, existing model-based interactive recommendation algorithms usually ignore the dynamic of popularity as shown in Figure \ref{fig:motivation}(a). For example, \textcolor{black}{pumpkins and Christmas trees have \textcolor{black}{a} similar aggregated mean of popularity over time}, but their popularity is different on different \textcolor{black}{holidays (Christmas and Thanksgiving Day occur at the peak of the popularity in the pink line and green line respectively)}. If we straightforwardly employ the sample reweighting methods, due to the static aggregated mean of pumpkins' popularity, the weights of pumpkins will be overly reduced on Christmas, which is unreasonable. 
Second, as shown in Figure \ref{fig:motivation} (c), the negative sampling in recommendation might take the unknown samples as the negative samples \cite{zhang2021causal}, which results in the sampling bias. Since the sampling bias might falsely take some unpopular but positive items as negative items, it indirectly amplifies the popularity bias. Therefore, these two drawbacks result in bias estimation, i.e., overly recommending the Christmas tree at $t+1$, as shown in Figure \ref{fig:motivation} (b). Besides the drawbacks above, the existing methods can rarely handle the case with latent confounders

\begin{figure*}[t]
		\centering\		\includegraphics[width=\columnwidth]{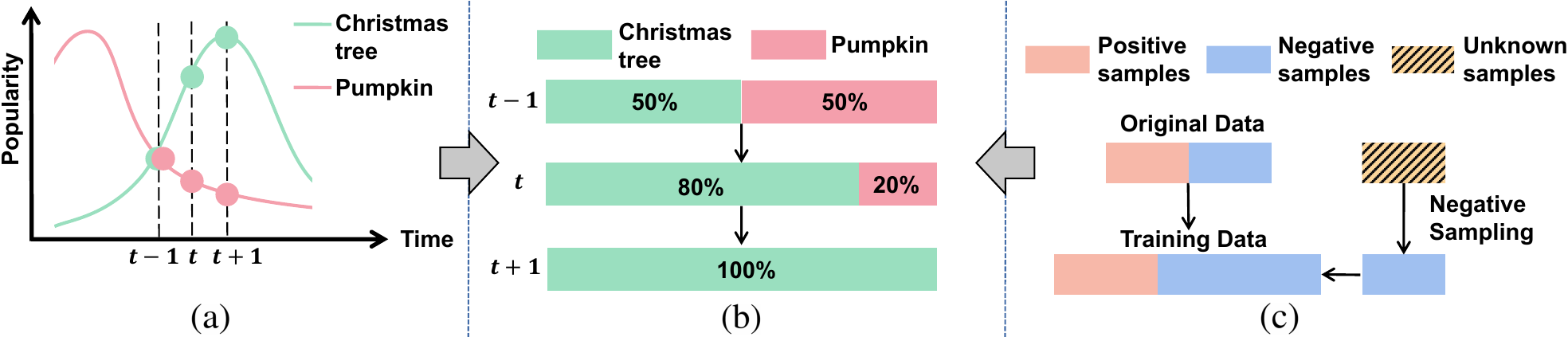}
		\caption{The illustration of two drawbacks that lead to biased estimation. (a) The popularity of items changes over time. (b) The estimated probability of different items is gradually degenerated by ignoring the dynamic popularity and sampling bias. (c) The negative sampling technique, which uses unknown samples as negative samples, will result in sampling bias. (\textit{Best view in color.})}
		\label{fig:motivation}
\end{figure*}

To overcome the aforementioned drawbacks, we develop the \textbf{i}dentifiable \textbf{D}ebiased
\textbf{M}odel-based \textbf{I}nteractive \textbf{R}ecommendation model (\textbf{iDMIR} in short). \textcolor{black}{First, we build a causal generation process \textcolor{black}{that} simultaneously considers the latent user preference confounders and time-varying popularity. To model the dynamics of popularity, we consider the popularity in different successive time intervals. Moreover, different from other debiased methods that require all observed confounders, we break this limitation by \textcolor{black}{modeling} the latent user variables via stochastic variational inference (SVI) \cite{louizos2017causal,kingma2013auto} with identification guarantees.} Second, \textcolor{black}{to avoid the sampling bias}, we devise a debiased \textcolor{black}{contrastive} policy. By separately processing the positive and negative sequences, the debiased \textcolor{black}{contrastive} policy can achieve the goal of debiasing and \textcolor{black}{aggregating} the user states. We also provide the theoretical analysis of intervention distribution and the latent variables to show the unbiasedness of the world model and policy. Extensive experimental studies demonstrate that the proposed iDMIR approach outperforms several latest interactive and debiased recommendation algorithms. 

\section{Related Works}
In this section, we first review the works about interactive recommendation systems based on reinforcement learning and debiased recommendation systems. Then we review the works about the identification of latent variables in causal models.

\subsection{Interactive Recommendation Systems}
Many \textcolor{black}{researchers} leverage 
reinforcement learning \cite{xiao2021general,kidambi2020morel,chen2021survey,afsar2022reinforcement,ahmadian2023rderl} to address the interactive recommendation problem, which can be categorized into \textcolor{black}{the} \textit{\textcolor{black}{bandit-based} methods}, \textit{model-free methods} and \textcolor{black}{the} \textit{model-based method}. 
The \textcolor{black}{bandit-based} methods \cite{liu2020diversified,li2010contextual} use the contextual \textcolor{black}{bandit} for building interactive recommendation systems. 
Li et.al \cite{li2010contextual} propose the contextual \textcolor{black}{bandit-based} method that sequentially selects articles to serve users based on contextual information. Recently, Song et.al \cite{10.1145/3397271.3401174} use the natural exploration in bandit to solve the click-through rate (CTR) underestimation problem. And Yang et.al \cite{yang2020hierarchical} propose HATCH to conduct the policy learning of contextual bandits with a budget constraint. 
The model-free methods \cite{ dulac2015deep,zhao2018deep,zou2019reinforcement} employ the ideas of model-free reinforcement learning like Monte Carlo (MC) \cite{lazaric2007reinforcement} and Temporal Difference (TD) \cite{tesauro1995temporal}. Chen et.al \cite{chen2019large} propose a tree-structured RL framework where items are sought from a balanced tree. Teng et.al \cite{xiao2021general} propose a general offline framework. Chen et.al \cite{10.1145/3442381.3449846} \textcolor{black}{use} the deterministic policy gradient \cite{silver2014deterministic} to model the multi-aspect \textcolor{black}{preference} of users.
\textit{Model-based methods} \cite{zou2020pseudo, 10.1145/3503181.3503203,nair2018overcoming} simulate the behaviors of customers and sequentially learn a policy via planning. 
Zhao et.al \cite{zhao2021usersim} develop a user simulator based on Generative Adversarial Networks. Aiming to devise an \textcolor{black}{unbiased} simulator, Zou et.al \cite{zou2020pseudo} leverage the sample reweighting technique to correct the discrepancy between recommendation policy and logging policy, and Huang et.al \cite{10.1145/3383313.3412252} use the Inverse Propensity Scoring (IPS) for \textcolor{black}{unbiased} estimation. In this paper, the proposed debiased causal world model explicitly \textcolor{black}{models} the dynamics of popularity instead of the static popularity in existing methods.

\subsection{Debiased Recommendation Systems}
The proposed method is also related to the debiased recommendation system \cite{chen2023bias,liu2023debiased,huang2023debiasing}, which usually borrows the solutions of causal effect \cite{cheng2021long,cai2022rest}. Sharma et al. \cite{sharma2015estimating} estimate the causal effect of \textcolor{black}{the} recommendation system from observed data. Bonner et.al \cite{bonner2018causal} propose the CausE to generate unbiased \textcolor{black}{predictions} under random exposure. Recently, Zhang et.al \cite{zhang2021causal} proposed the popularity-bias deconfounding and adjusting method to remove or adjust the popularity via causal intervention. And Wang et.al \cite{wang2021deconfounded} \textcolor{black}{use} the backdoor adjustment to address the impact of confounders. In this paper, we overcome the constraint of observed confounders and generate \textcolor{black}{unbiased} estimations via VAE \cite{kingma2013auto}. However, these methods seldom consider the effect of latent variables. Hence other researchers investigated the debiased recommendation system with the prior causal generation process with latent variables. Recently, Cai et.al \cite{cai2022rest} consider the exposure strategies as latent variables and reconstruct these latent variables via variational autoencoder. However, without identifying the latent variables, it is hard to guarantee that the model reconstructs the true causal generation process. In this paper, we devise a causal world model and model it with identification guarantees.

\subsection{Identification in Causal Models}
Causal representation learning \cite{scholkopf2021toward,kumar2017variational,locatello2019challenging,locatello2019disentangling,zheng2022identifiability,trauble2021disentangled,li2023subspace} is gaining increasing attention as a means to enhance explanations of generative models. This approach focuses on capturing latent variables of causal generation processes. One prominent contender is independent component analysis (ICA) \cite{hyvarinen2002independent,hyvarinen2013independent,zhang2008minimal,zhang2007kernel,xiemulti,comon1994independent}, a classical avenue for learning causal representations. ICA assumes a linear mixture function for the generation process. Yet, nonlinear ICA poses a formidable challenge due to the non-identifiability of latent variables unless extra assumptions are imposed on either the latent variable distribution or the generation process \cite{hyvarinen1999nonlinear,zheng2022identifiability,hyvarinen2023identifiability,khemakhem2020ice}. Notably, recent work by Aapo et al. \cite{hyvarinen2016unsupervised,hyvarinen2017nonlinear,hyvarinen2019nonlinear,khemakhem2020variational,halva2021disentangling,halva2020hidden} has introduced identification theories that introduce auxiliary variables such as domain indexes, time indexes, and class labels. These approaches typically assume conditional independence of latent variables and adherence to exponential families. In a departure from this exponential families assumption, Zhang et al. \cite{kong2022partial,xiemulti} have extended identification outcomes for nonlinear ICA to include a specific number of auxiliary variables on a component-wise basis. Building upon these theoretical foundations, Yao et al. \cite{yao2022temporally,yao2021learning} have successfully extracted time-delay latent causal variables and their interrelationships from sequential data, even in scenarios with stationary settings and varying distribution shifts. Additionally, Xie et al. \cite{xiemulti} have harnessed nonlinear ICA for reconstructing joint distributions of images from disparate domains. Meanwhile, Kong et al. \cite{kong2022partial} have tackled domain adaptation challenges through component-wise identification results. In this paper, we take time-varying social networks as a special type of supervised signal and reconstruct the latent user preference with an identification guarantee.


\section{Preliminaries}
In this section, we first model the interactive recommendation system as a problem of model-based reinforcement learning. The $t$-th interaction of interactive recommendation can be separated into three steps: 1) the recommendation system \textcolor{black}{suggests} an \textcolor{black}{item} $a_t \in \mathcal{A}$ to a user $u \in \mathcal{U}$\footnote{With the abuse of notation, we let $a$ be the items or its embedding for easy comprehension, and the same as user $u$.}; 2) the user $u$ provides feedback $y_{t}$; 3) the future user \textcolor{black}{preference is} influenced by the feedback of neighbors and the historical user \textcolor{black}{preference}. The goal of the interactive recommendation system is to learn \textcolor{black}{a} model that can suggest as many as acceptable items to users in limited interactions.

\textbf{Online Interactive Recommendation. }Based on the aforementioned process of interaction, we can consider the process of interactive recommendation as a Markov Decision \textcolor{black}{Process} (MDP), which \textcolor{black}{is denoted} by a tuple $\mathcal{M}=(\mathcal{\bm{S}}^u, \mathcal{A}, \rho, R, H, \gamma)$. Here, we let $\mathcal{\bm{S}}^u$ be the latent state space of users, which models the preference of users and \textcolor{black}{is} influenced by the behaviors of neighbors and \textcolor{black}{their} historical \textcolor{black}{preference}; $\mathcal{A}$ denotes the action space which consists of items; $\rho: \mathcal{\bm{S}}^u \times \mathcal{A} \rightarrow \mathcal{\bm{S}}^u$ is the transition function and $R: \mathcal{\bm{S}}^u \times \mathcal{A} \rightarrow y \in \{0, 1\}$ is the reward function; $H$ is the horizon and $\gamma$ is the discount factor to balance the immediate rewards and future rewards. In this way, we model interactive recommendation as a reinforcement learning problem. The object of \textcolor{black}{the} interactive recommendation system is to optimize a policy $\pi:\mathcal{S}^u \rightarrow \mathcal{A}$ to maximize the expected $\gamma$-discounted cumulative reward $\bm{J}_{\rho}(\pi):=\mathbb{E}_{\tau \sim P(\tau|\pi,\rho)}\left[\sum_{t=0}^{H}\gamma_t R_t\right]$, in which $P(\tau|\pi, \rho)$ \textcolor{black}{denotes} the probability of generating a trajectory $\tau:=[w_0, a_0, y_0, \cdots, w_H, a_H, y_H]$ under the policy $\pi$ and the transition function $\rho$, and $w_t=\{z_t, \bm{G}_t\}$ denotes global attributes including popularity and behaviors of neighbors. 

\textbf{Offline Interactive Recommendation. }
\textcolor{black}{Due to the expensive cost, it is unrealistic to directly access the real environment to obtain the rewards. As a result,} the offline reinforcement learning setting is often employed to learn a policy with the historical trajectory $\mathcal{D}=\{\tau^1, \cdots,\tau^N\}$, in which $N$ is the number of \textcolor{black}{users}. In the case of model-based reinforcement learning, we aim to optimize $\bm{J}_{\rho}(\pi)$ by accessing the simulated environment modeled via $\mathcal{D}$.

\section{identifiable Debiased Model-based Interactive Recommendation}
\subsection{Model Overview}\label{sec:problem_def}
The framework of \textcolor{black}{the} proposed \textbf{iDMIR} model is shown in \textcolor{black}{Figure \ref{fig:model}}, including the debiased causal world model and the debiased \textcolor{black}{contrastive} policy. The debiased causal world model, which is developed based on the causal mechanism shown in Figure \ref{fig:model}(a), is used to remove the bad impact of the popularity bias according to the dynamic popularity and imitate \textcolor{black}{unbiased} feedback. The debiased \textcolor{black}{contrastive} policy shown in Figure \ref{fig:model} (c), which is in agreement with the debiased \textcolor{black}{contrastive} learning, is used to avoid the bad impact of sampling bias and further select the optimum items. Hence, \textcolor{black}{as shown in Figure \ref{fig:model}(b)}, the proposed \textbf{iDMIR} works under the framework of model-based reinforcement learning, where the debiased \textcolor{black}{contrastive} policy (\textbf{policy}) suggests \textcolor{black}{unbiased} items (\textbf{actions}) and the debiased causal world model (\textbf{simulated environment}) provides \textcolor{black}{unbiased} feedback (\textbf{rewards}).
\begin{figure*}[t]
		\centering
\includegraphics[width=\columnwidth]{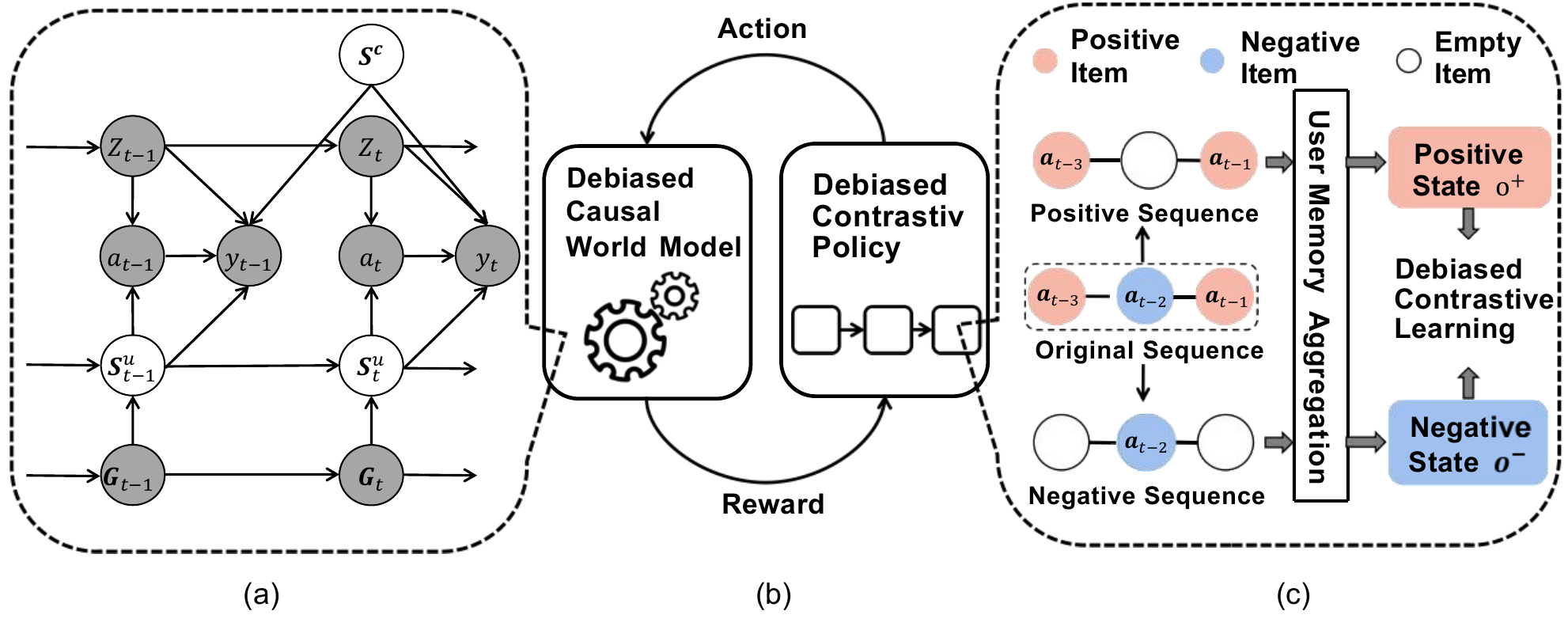}
		\caption{An illustration of iDMIR. (b) \textcolor{black}{illustrates how the debiased contrastive policy and the debiased causal world model are under the framework of model-based reinforcement learning.} (a) Causal graph of debiased causal world model in \textcolor{black}{the} interactive recommendation. (c) The debiased contrastive policy respectively use the positive and negative sequence for unbiased policy. }
		\label{fig:model}
\end{figure*}

\subsection{Learning Debiased Causal World Model with Identification Guarantees}
Since it is too expensive to directly access the real environment, a world model \cite{ha2018world} is desired to answer \textcolor{black}{the} question like ``What would the feedback $y_t$ be if a policy suggested an item $a_t$ under user preference $\bm{s}_t^u$ and context $\bm{s}^c$''. 
\textcolor{black}{To this end, we propose the causal mechanism of the interactive recommendation shown in Figure 2 (a).} First, $\bm{s}_t^u, a_t, z_t, \bm{s}^c \rightarrow y_t$ \textcolor{black}{denotes the reward function $R$ and describes a decision process}, \textcolor{black}{where} the feedback $y_t$ is determined by the user preference $\bm{s}_t^u$, the suggested item $a_t$, the $t$-th popularity $z_t$ and the other time-invariant context information $\bm{s}^c$. Second, $\bm{G}_t, \bm{s}_{t-1}^u \rightarrow \bm{s}_t^u$ \textcolor{black}{\textcolor{black}{denotes} the transition function $\rho$ and describes the user preference transformation process}, where \textcolor{black}{users} \textcolor{black}{preference is} influenced by the preference of neighbors and their historical \textcolor{black}{preference}.Third, $z_{t-1} \!\! \rightarrow \!\! z_t$ and $\bm{G}_{t\!-\!1}\! \! \rightarrow \!\!\bm{G}_t$ \textcolor{black}{denote} the dynamic of popularity and preference of neighbors, respectively.
\textcolor{black}{This causal mechanism explicitly includes the necessary elements of \textcolor{black}{the environment of reinforcement learning} ($R$ and $\rho$), so it is suitable to build the world model.} 
\textcolor{red}{In summary, the data generation process is formalized as Equation (\ref{equ:gen}):}
\begin{equation}
\label{equ:gen}
\color{red}
\begin{split}
    \underbrace{\bm{s}^c \sim p_{c}(\bm{s}^c),}_{\text{context information}} \underbrace{\bm{s}_t^u=\rho(\bm{s}_{t-1}^u,\bm{G}_t)}_{\text{state transition}}, \underbrace{y_t=g(\bm{s}_t^u,a_t,z_t,\bm{s}^c)}_{\text{reward generation}}, \\ \underbrace{z_{t}=f_z(z_{t-1})}_{\text{popularity transition}}, \underbrace{\bm{G}_{t}=f_G(\bm{G} _{t-1})}_{\text{social networks transition}}
\end{split}
\end{equation}
\textcolor{red}{where $\rho$ denotes the transition function and $g$ denotes the function of reward generation.}
Based on the above generation process, answering the aforementioned question equals to \textcolor{black}{estimating} the following \textcolor{red}{intervention} distribution: 
\begin{equation}
\label{equ:do_cal}
   P(y_t|y_{t-1},\bm{G}_t, \bm{G}_{t-1}, z_t, z_{t-1}, a_{t-1}, \bm{s}_{t-1}^u, do(a_t)),
\end{equation}
where $do(\cdot)$ denotes the \textit{do-calculus} \cite{pearl2009causality}. Note that $\bm{s}_{t-1}^u$ can be estimated recursively based on the state transition function which will be introduced in the implementation details of the debiased causal world model.

\textcolor{red}{Several methods \cite{chen2023bias,liu2023debiased,huang2023debiasing} employ the causal generation process for debiased estimation, but these methods usually do not take the latent variables into account, leading to the negative influence of unobserved confounders in real-world scenarios. Recently, Cai et.al. \cite{cai2022rest} involve the exposure strategies as latent variables for the debiased social recommendation. However, they do not provide any identification guarantees for the latent variables, which cannot ensure the theoretical correctness in reconstructing latent variables and further the debiased estimation. To identify the debiased estimation as shown in Equation (\ref{equ:do_cal}), we provide a series of theories as shown in Figure \ref{fig:theory}. Specifically, we use Theorem 1 to make sure that Equation (\ref{equ:do_cal}) can be well reconstructed when the causal generation process can be well learned. Moreover, to guarantee that the causal generation process can be well learned, we propose Theorem 2 and Lemma 3 to make sure that the latent variables $\bm{s}_{t}^u$ and $\bm{s}^c$ are identifiable, respectively. }


\begin{figure*}[t]
		\centering\		\includegraphics[width=0.4\columnwidth]{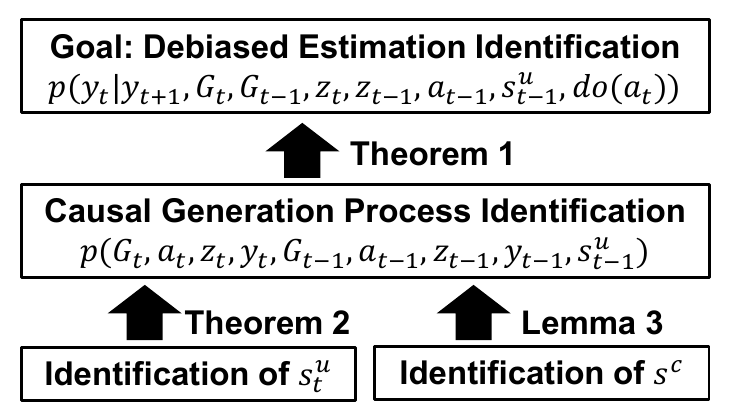}
		\caption{\textcolor{red}{An illustration of theoretical framework for debiased estimation identification.}}
		\label{fig:theory}
\end{figure*}

\subsubsection{\textcolor{red}
{Identification of Debiased Estimation}}

\textcolor{red}{We first show how the \textcolor{red}{intervention} distribution in Equation (\ref{equ:do_cal}) can be identified up to the estimated causal generation process. More formally, the conditional distribution with do-calculus can be inferred from the observation data \cite{peters2017elements}. Therefore, by modeling the joint distribution of observation of two successive timestamps \footnote{\textcolor{red}{In any successive timestamps (i.e., $t-1$ and $t$), we assume that $\bm{s}_{t-1}^u$ is observed since it can be estimated recursively based on the causal mechanism and $\bm{s}_0$ is assumed to be known.}} i.e., $t-1$ and $t$, we can identify the debiased estimation as follows:}


\begin{theorem}
\label{the_id1}
(\textbf{Identification of Debiased Estimation in Interactive Recommendation})\\ Suppose that the joint distribution $P(\bm{G}_t, a_t, z_t, y_t, \bm{G}_{t\!-\!1}, a_{t\!-\!1}, z_{t\!-\!1}, y_{t\!-\!1}, \bm{s}_{t\!-\!1}^u)$ is recovered, then the Equation (\ref{equ:do_cal}) can be estimated under the causal model shown in Figure 2 (a).
\end{theorem}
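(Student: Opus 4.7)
The plan is to show that Equation (\ref{equ:do_cal}) is identifiable by reducing the interventional expression to an integral of identifiable observational quantities, using the structural equations in Equation (\ref{equ:gen}) together with do-calculus applied to the causal graph in Figure \ref{fig:model}(a). First I would invoke the structural equations: since $y_t = g(\bm{s}_t^u, a_t, z_t, \bm{s}^c)$ and $\bm{s}_t^u = \rho(\bm{s}_{t-1}^u, \bm{G}_t)$, the conditioning set $\{\bm{s}_{t-1}^u, \bm{G}_t, z_t\}$ together with $a_t$ and the exogenous latent $\bm{s}^c$ fully determines $y_t$. Because $\bm{s}^c$ is drawn from $p_c$ with no directed edge into $a_t$ in the assumed diagram, the intervention $do(a_t)$ cuts only the incoming edges to $a_t$ and leaves the marginal of $\bm{s}^c$ unchanged. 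This yields the target decomposition
\[
P(y_t\mid \cdot, do(a_t)) \;=\; \int p\bigl(y_t\mid \rho(\bm{s}_{t-1}^u,\bm{G}_t),\, a_t,\, z_t,\, \bm{s}^c\bigr)\, p_c(\bm{s}^c)\, d\bm{s}^c,
\]
where ``$\cdot$'' abbreviates the full conditioning set of Equation (\ref{equ:do_cal}).

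Next I would verify that the conditioning set satisfies the backdoor criterion for the pair $(a_t, y_t)$. The only backdoor routes from $a_t$ to $y_t$ must pass through (i) the user-state channel via $\bm{s}_{t-1}^u \to \bm{s}_t^u$, (ii) the popularity channel via $z_{t-1} \to z_t$, or (iii) the social-graph channel via $\bm{G}_{t-1} \to \bm{G}_t$. Conditioning on $\bm{s}_{t-1}^u, z_{t-1}, z_t, \bm{G}_{t-1}, \bm{G}_t, a_{t-1}, y_{t-1}$ blocks all three without unblocking any collider, and no backdoor path can route through $\bm{s}^c$ because $\bm{s}^c$ has no incoming arrows from $a_t$'s ancestors. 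Applying Rule 2 of do-calculus then gives
\[
P(y_t\mid \cdot, do(a_t)) \;=\; P(y_t\mid \cdot, a_t),
\]
which is a purely observational functional of the recovered joint over two successive timestamps.

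The main obstacle will be handling the latent quantities $\bm{s}_t^u$ and $\bm{s}^c$ rigorously, since the backdoor argument above treats $\bm{s}_{t-1}^u$ as if it were observed and integrates out $\bm{s}^c$ against its true prior. To close this gap I would invoke Theorem 2 to assert that $\bm{s}_{t-1}^u$ (and hence, via $\rho$, the recursively computed $\bm{s}_t^u$) is identifiable from the recovered joint, and Lemma 3 to assert the analogous statement for $\bm{s}^c$ together with its prior $p_c$. Once these two identifiability results are in place, the mixing kernel $p(y_t\mid \bm{s}_t^u, a_t, z_t, \bm{s}^c)$ and the adjustment distribution $p_c(\bm{s}^c)$ in the integrand can each be consistently reconstructed, which closes the loop and shows that the interventional query in Equation (\ref{equ:do_cal}) is identifiable whenever the joint distribution over two consecutive timestamps is recovered.
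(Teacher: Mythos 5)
Your overall strategy---marginalize the interventional query over the latents and reduce it to components of the recovered generative model via do-calculus on the graph of Figure \ref{fig:model}(a)---is the same as the paper's, but two concrete steps go wrong. First, you integrate $\bm{s}^c$ against its prior $p_c(\bm{s}^c)$. The target in Equation (\ref{equ:do_cal}) conditions on $y_{t-1}$, $a_{t-1}$, $z_{t-1}$ and $\bm{s}_{t-1}^u$, and since $\bm{s}^c$ is a parent of $y_{t-1}$ (via $y_{t-1}=g(\bm{s}_{t-1}^u,a_{t-1},z_{t-1},\bm{s}^c)$), conditioning on the past updates its distribution. The correct adjustment term is the posterior $P(\bm{s}^c\mid y_{t-1},\bm{s}_{t-1}^u,z_{t-1},a_{t-1})$---exactly term $(iii)$ of the paper's decomposition in Equation (\ref{equ:sketch_proof}), and the reason the model carries the encoder of Equation (\ref{equ:beta}). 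Replacing the posterior by the prior yields a generally different quantity and discards precisely the information about the time-invariant context that the ``debiasing'' is meant to exploit.

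Second, your Rule-2/backdoor step, which collapses $P(y_t\mid\cdot,do(a_t))$ to the plain observational conditional $P(y_t\mid\cdot,a_t)$, is neither justified on this graph nor consistent with your own first display (the first integrates $\bm{s}^c$ over its prior; the second would integrate it over its posterior given the full conditioning set including $a_t$; these differ in general). The dangerous backdoor path is $a_t\leftarrow \bm{s}^c\rightarrow y_t$, which is open whenever the logging policy that produced $a_t$ depended on the latent context---the setting this paper is built for---and the observed conditioning set cannot block it because $\bm{s}^c$ is latent. Your stated reason (``$\bm{s}^c$ has no incoming arrows from $a_t$'s ancestors'') concerns arrows \emph{into} $\bm{s}^c$, whereas a backdoor path through $\bm{s}^c$ needs an arrow \emph{out of} $\bm{s}^c$ into $a_t$. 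The paper avoids this entirely: it first conditions on all parents of $y_t$, including the latents $\bm{s}_t^u$ and $\bm{s}^c$, and only then drops the $do$-operator, never claiming the unadjusted conditional identifies the query. Finally, invoking Theorem \ref{the1} and Lemma \ref{lemma} inside this proof is out of place: Theorem \ref{the_id1} already takes the recovered joint (including $\bm{s}_{t-1}^u$) as its hypothesis, and the identifiability of the latents is what justifies that hypothesis elsewhere in the paper, not a step of this argument.
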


\textit{Sketch of proof:}(See  \ref{appendix_the1} for full proof). The basic idea is that when the joint distribution is recovered, the \textcolor{red}{intervention} distribution can be estimated by the \textcolor{black}{integral} of latent variables, which is shown as follows:
\begin{equation}
\label{equ:sketch_proof}
\begin{split}
    &P(y_t|y_{t\!-\!1}, \bm{G}_t, \bm{G}_{t\!-\!1}, z_t, z_{t\!-\!1}, a_{t\!-\!1}, \bm{s}_{t\!-\!1}^u, do(a_t))\\=&\int \!\! \underbrace{P(y_t|\bm{s}^c, z_t, a_t, \bm{s}_t^u)}_{(i)} \underbrace{P(\bm{s}_t^u|\bm{s}_{t\!-\!1}^u,\bm{G}_t,a_{t},z_{t})}_{{(ii)}}  \underbrace{P(\bm{s}^c|y_{t\!-\!1}, \bm{s}_{t\!-\!1}^u, z_{t\!-\!1}, a_{t\!-\!1})}_{\text(iii)}d\bm{s}_t^u d \bm{s}^c.
\end{split}
\end{equation}
\textcolor{black}{By doing this, we can leverage terms $(i)\sim(iii)$ to estimate the \textcolor{red}{intervention} distribution shown in Equation (\ref{equ:sketch_proof}). The implementation details will be introduced by recovering the joint distribution. Technologically, we approximate the latent user preference $\bm{s}_t^u$ of \textcolor{black}{terms} $(ii)$ by Equation (\ref{equ:s_enc})\textcolor{black}{,} and approximate the time-invariant context information $\bm{s}^c$ of term ($iii$) by Equation (\ref{equ:beta}). Sequentially, we can estimate unbiased feedback of term ($i$) by Equation (\ref{equ:pred_y}). In summary, the intervention distribution is theoretically computable based on the debiased causal world model.}

Based on the causal mechanism, we model the joint distribution with the help of stochastic variational inference and derive the evidence lower bound (ELBO) as shown in Equal (\ref{equ:elbo}) (See more details in \ref{appendix_elbo}).

\begin{equation}
\label{equ:elbo}
\begin{split}
\mathcal{L}_{ELBO}=&
    -D_{KL}(\hat{P}(\bm{s}_{t}^u|\bm{s}_{t\!-\!1}^u,\bm{G}_t,z_{t},a_{t})||P(\bm{s}_{t}^u|\bm{s}_{t\!-\!1}^u,\bm{G}_t,z_{t},a_{t}))\\
&-D_{KL}(\hat{P}(\bm{s}^c|z_{t\!-\!1},a_{t\!-\!1},y_{t\!-\!1},\bm{s}_{t\!-\!1}^u)||P(\bm{s}^c|z_{t\!-\!1}, a_{t\!-\!1}, y_{t-1}, \bm{s}_{t\!-\!1}^u))\\
&+E_{\hat{P}(\bm{s}_t^u|\bm{s}_{t\!-\!1}^u,\bm{G}_t, a_t, z_t)}E_{\hat{P}(\bm{s}^c|z_{t\!-\!1},a_{t\!-\!1},y_{t\!-\!1},\bm{s}_{t\!-\!1}^u)}\ln P(y_{t}|z_{t},a_{t},\bm{s}_{t}^u,\bm{s}^c)\\
&+E_{\hat{P}(\bm{s}_t^u|\bm{s}_{t\!-\!1}^u,\bm{G}_t, a_t, z_t)}E_{\hat{P}(\bm{s}^c|z_{t\!-\!1},a_{t\!-\!1},y_{t\!-\!1},\bm{s}_{t\!-\!1}^u)}\ln P(y_{t\!-\!1}|z_{t\!-\!1},a_{t\!-\!1},\bm{s}_{t\!-\!1}^u),
\end{split}
\end{equation}
where $D_{KL}(\cdot|\cdot)$ denotes the Kullback-Leibler divergence; $\hat{P}(\bm{s}^c|y_{t-1}, \bm{s}_{t\!-\!1}^u, z_{t\!-\!1}, a_{t\!-\!1})$ and $\hat{P}(\bm{s}_t^u|\bm{s}_{t\!-\!1}^u,\bm{G}_t, a_t, z_t)$ are used to approximate the distribution of $\bm{s}^c$ and $\bm{s}_t^u$; In addition, $P(y_{t-1}|z_{t-1},a_{t-1},\bm{s}_{t-1}^u,\bm{s}^c)$ and $P(y_{t}|z_{t},a_{t},\bm{s}_{t}^u,\bm{s}^c)$ denote the feedback predictors. 

\textcolor{red}{However, simply optimizing the ELBO cannot make sure that the latent variables i.e. $\bm{s}_{t}^u$ and $\bm{s}_{c}$ can be correctly reconstructed with theoretical guarantees. To solve this problem, we further propose Theorem 2 and Lemma 3 to show these latent variables can be reconstructed with theoretical guarantees in the following subsections.}

\subsubsection{\textcolor{red}
{Identification of user preference latent variables $\bm{s}_t^u$}}

\textcolor{red}{In this subsection, we first show that the user preference latent variables can be identified up to component-wise transformations. Formally, for each true latent variable $\bm{s}_{t,i}^u$, there exists a corresponding estimated latent variables $\hat{\bm{s}}_{t,i}^u$ and an invertible function $h_{u,i}:\mathbb{R}\rightarrow \mathbb{R}$, such that $\hat{\bm{s}}_{t,i}^u=h_{u,i}(\bm{s}_{t,i}^u)$. For better understanding, we suppose that the $\bm{s}^c$ and $\bm{s}_{t}^u$ correspond to components in $\bm{s}_t$ with indices $\{1, \cdots, n_c\}$ and $\{n_c+1,\cdots,n\}$, respectively. More specifically, we let $\bm{s}^c=(\bm{s}_{t,i})_{i=1}^{n_c}$ and $\bm{s}_{t}^{u}=(\bm{s}_{t,i})_{i=n_c+1}^{n}$ \footnote{For convenience, we let $\bm{s}_t=[\bm{s}^c;\bm{s}_t^u]$ and the size of $\bm{s}_c$ and $\bm{s}_u$ be $n_c$ and $n_u$, respectively.}. Then we leverage Theorem 2 to show that the user preference latent variables are component-wise identification as follows. }

\begin{theorem}
\color{red}
\label{the1}
We follow the data generation process in Figure \ref{fig:model}(a) and make the following assumptions:
\begin{itemize}[leftmargin=*]
    \item A1 (\underline{Smooth and Positive Density}): The probability density function of latent variables is smooth and positive, i.e., $p(\bm{s}_t|\bm{s}_{t-1}, G_t, a_t, z_t)$ is smooth and $p(\bm{s}_t|\bm{s}_{t-1}, G_t, a_t, z_t)>0$.
    \item A2 (\underline{Conditional Independence}): Conditioned on $G_t$ and $\bm{s}_{t-1}$, $\bm{s}_{t,i}$ is independent of any other $\bm{s}_{t,j}$ for $i,j \in [n], i \neq j$, i.e., $\log p(\bm{s}_t|\bm{s}_{t-1}, G_t, a_t, z_t)=\sum_i^n q_i(\bm{s}_{t,i},\bm{s}_{t-1}, G_t, a_t, z_t)$, where $q_i$ denotes the log density of the conditional distribution, i.e., $q_i:=\log p(\bm{s}_{t, i}|\bm{s}_{t-1}, G_t, a_t, z_t)$
    \item A3 (\underline{Linear Independence}): For any $\bm{s}_t^u\in \mathcal{Z}_t \subseteq \mathbb{R}^{n_u}$, there exist $2n_u+1$ values of $\bm{G}_t$, i.e., $\bm{G}_{t,j}$ with $j=0,1,\cdots,2n_u$, such that the $2n_u$ vectors $\mathrm{w}(\bm{s}_t^u,\bm{s}_{t-1}^u,G_{t,j},a_t,z_t)-\mathrm{w}(\bm{s}_t^u,\bm{s}_{t-1}^u,G_{t,0},a_t,z_t)$ with $j=1,\cdots,2n_u$, are linearly independent, where vector $\mathrm{w}(\bm{s}_t^u,\bm{s}_{t-1}^u,G_{t,j},a_t,z_t)$ is defined as follows:
    \begin{equation}
    \begin{split}
        &\mathrm{w}(\bm{s}_t^u,\bm{s}_{t-1}^u,G_{t,j},a_t,z_t)\\&=\left(\frac{\partial q_{0}(\bm{s}_{t,0}^u,\bm{s}_{t-1}^u,\bm{G}_t,a_t,z_t)}{\partial \bm{s}_{t,0}^u},\cdots, \frac{\partial q_{n-1}(\bm{s}_{t,n-2}^u,\bm{s}_{t-1}^u,\bm{G}_t,a_t,z_t)}{\partial \bm{s}_{t,n-1}^u},\cdots, \right .\\&\left . \frac{\partial^2 q_{0}(\bm{s}_{t,0}^u,\bm{s}_{t-1}^u,\bm{G}_t,a_t,z_t)}{\partial (\bm{s}_{t,0}^u)^2}, \cdots \frac{\partial^2 q_{n-1}(\bm{s}_{t,n-2}\bm{s}_{t,0}^u,\bm{s}_{t-1},\bm{s}_{t,0}^u,\bm{G}_t,a_t,z_t)}{\partial (\bm{s}_{t,n-1}^u)^2} \right).
    \end{split}
    \end{equation}
\end{itemize}
By learning the data generation process, ${\bm{s}}_t^u$ is component-wise identifiable.
\end{theorem}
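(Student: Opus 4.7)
The plan is to adapt the standard nonlinear ICA identification argument, in the line of Hyv\"arinen, Khemakhem, Yao and Kong cited in the related work, to the conditional generation model of $\bm{s}_t^u$ given the auxiliary information $(\bm{s}_{t-1}^u,\bm{G}_t,a_t,z_t)$. The key conceptual point is that $\bm{G}_t$ enters the causal graph only as a parent of $\bm{s}_t^u$ (not of $\bm{s}^c$), so it is exactly the auxiliary variable whose modulation drives identification of the user-preference block. I would begin by noting that, once the data generation process is learned, matching the joint observational distribution forces the true conditional density $p(\bm{s}_t^u\mid \bm{s}_{t-1}^u,\bm{G}_t,a_t,z_t)$ and the estimated density $\hat p(\hat{\bm{s}}_t^u\mid \bm{s}_{t-1}^u,\bm{G}_t,a_t,z_t)$ to be related by a diffeomorphism $T:\mathbb{R}^{n_u}\to\mathbb{R}^{n_u}$ with $\hat{\bm{s}}_t^u=T(\bm{s}_t^u)$. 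Assumption A1 guarantees that the change-of-variables formula and its logarithm are well defined and can be differentiated freely.

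Next, I would apply the conditional independence assumption A2 to both sides, decomposing each log-density into a sum of univariate terms $\sum_i q_i(\bm{s}_{t,i}^u,\cdot)$ and $\sum_i \hat q_i(\hat{\bm{s}}_{t,i}^u,\cdot)$. The central manipulation is to differentiate the resulting identity twice, once with respect to $\hat{\bm{s}}_{t,k}^u$ and once with respect to $\hat{\bm{s}}_{t,l}^u$ for $k\neq l$. On the estimated side the separability annihilates every off-diagonal second derivative, while on the true side the chain rule produces a bilinear expression in the entries of the inverse Jacobian $J_{T^{-1}}$ multiplied by first and second partial derivatives of the $q_i$ evaluated at the true latents. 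The task therefore reduces to forcing all off-diagonal products of Jacobian entries to vanish.

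Third, I would invoke A3 by evaluating this identity at the $2n_u+1$ prescribed auxiliary values $\bm{G}_{t,j}$ and subtracting the $j=0$ equation from each of the others. The resulting homogeneous linear system has coefficient vectors exactly $\mathrm{w}(\cdot,\bm{G}_{t,j})-\mathrm{w}(\cdot,\bm{G}_{t,0})$ for $j=1,\dots,2n_u$, whose linear independence is precisely the content of A3. This forces every off-diagonal product of the form $(\partial T_m^{-1}/\partial \hat{s}_{t,k}^u)(\partial T_m^{-1}/\partial \hat{s}_{t,l}^u)$ with $k\neq l$ to vanish, and a standard argument then shows that $J_{T^{-1}}$ is a generalized permutation matrix. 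Consequently $\hat{\bm{s}}_{t,i}^u=h_{u,i}(\bm{s}_{t,\sigma(i)}^u)$ for some permutation $\sigma$ and scalar invertible maps $h_{u,i}$, which is the component-wise identifiability claimed in the statement.

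The main obstacle, and the reason A3 is phrased so intricately, will be arranging the bookkeeping so that the coefficient vector $\mathrm{w}$ cleanly separates first-derivative contributions, which arise from cross terms in the log-Jacobian expansion, from second-derivative contributions, which arise from the curvature of each $q_i$, and then verifying that the $2n_u$ difference vectors span a coefficient subspace large enough that only the trivial solution survives. A secondary subtlety will be the conditioning on $\bm{s}_{t-1}^u$, $a_t$ and $z_t$: since the argument applies pointwise in these variables, I would treat them as frozen at arbitrary values throughout and restore the full conditioning only at the very end.
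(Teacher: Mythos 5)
Your proposal is correct and follows essentially the same route as the paper's proof: relate true and estimated latents by an invertible map via observational equivalence and change of variables, decompose the conditional log-density using A2, take cross second-order derivatives to isolate products of Jacobian entries, and use the $2n_u$ difference vectors from A3 to obtain a full-rank homogeneous system whose only solution forces the Jacobian to have one nonzero entry per row. The only cosmetic difference is that you differentiate with respect to the estimated latents while the paper works through $h^{-1}$ on the true side; the argument is otherwise identical.
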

\textcolor{red}{\textbf{Proof sketch.} More proof is provided in \ref{appendix_the2}.The proof of Theorem \ref{the1} can be separated into three steps. First, we construct an invertible transformation $h$ between the ground-truth user preference latent variables $\bm{s}_{t}^u$ and the estimated ones $\hat{s}_{t}^u$. Second, we leverage the variance of the time-varying social networks to construct a full-rank linear system, which contains only zero solutions. Finally, we show that the user preference latent variables are component-wise identifiable by leveraging the invertibility of the Jacobian of $h$. }

\noindent\textcolor{red}{\textbf{Discussion.} Theorem 2 shows that we can identify the user preference latent variables by using sufficiently varying social networks (i.e., $2n_u+1$ different values of network structures). We can explain this conclusion in a heuristic real-world observation: The more abundant the user behaviors are, the more complex the social networks become, and the easier and more accurately the model can capture the user preference information. }

\subsubsection{\textcolor{red}{Identification of user context latent variables $\bm{s}^c$}}
\textcolor{red}{Based on the component-wise identification of user preference latent variables, we further show that the user context latent variables are block-wise identifiable, meaning that the estimated context latent variables have preserved the information of the ground-truth context latent variables. Formally, we aim to show that there exists an invertible mapping $h'_c: \mathcal{S}^c \rightarrow \mathcal{S}^c$ between the estimated context latent variables and the true part. }

\begin{lemma}
\color{red}
\label{lemma}
We follow the data generation process in Figure \ref{fig:model}(a) and make assumptions A1-A3. Moreover, we make the following assumption. For any set $A_{s}\in \mathcal{S}$ with the following two properties:
\begin{itemize}
    \item  $A_{s}$ has nonzero probability measure, i.e., $\mathbb{P}[\{s\in A_{s}\}|\{\bm{G_t}=\bm{G}'\}]>0$ for any $\bm{G}' \in \mathcal{G}$.
    
    \item $A_{s}$ cannot be expressed as $B_{\bm{s}^c} \times \mathcal{S}^u$ for any $B_{\bm{s}^c} \subset \mathcal{S}^{c}$.
\end{itemize}
$\exists \bm{G}_1,\bm{G}_2 \in \mathcal{G}$, such that 
\begin{equation}
    \int_{s\in A_{s}} P(s|G_1)ds \neq \int_{s\in A_{s}} P(s|G_2)ds .
\end{equation}
By modeling the data generation process, $\bm{s}^c$ is block-wise identifiable.
\end{lemma}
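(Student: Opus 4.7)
The plan is to bootstrap from Theorem~\ref{the1} and exploit the variability of $\bm{G}_t$ encoded in the supplementary assumption. First, I would observe that matching the model joint distribution against the true one (as already used in Theorem~\ref{the_id1}) yields a smooth invertible map $h:\mathcal{S}\to\hat{\mathcal{S}}$ with $\hat{\bm{s}} = h(\bm{s})$. Decomposing $h=(h_c,h_u)$ so that $\hat{\bm{s}}^c = h_c(\bm{s}^c,\bm{s}_t^u)$ and $\hat{\bm{s}}_t^u = h_u(\bm{s}^c,\bm{s}_t^u)$, the target claim reduces to showing that $h_c$ has no dependence on $\bm{s}_t^u$, which is exactly the block-wise identifiability of $\bm{s}^c$.

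The central step is a contradiction argument driven by the supplementary assumption. Fix any measurable $\hat{B}^c\subset\hat{\mathcal{S}}^c$ of positive probability and set $A_{s}:=h^{-1}(\hat{B}^c\times\hat{\mathcal{S}}^u)$. Because $\bm{s}^c$ is time-invariant and, by the causal graph, independent of $\bm{G}_t$, while $\hat{\bm{s}}^c$ recovers the law of $\bm{s}^c$, the marginal of $\hat{\bm{s}}^c$ cannot vary with $\bm{G}_t$, so
\begin{equation}
\int_{s\in A_{s}} p(s\mid\bm{G}_1)\,ds \;=\; \mathbb{P}[\hat{\bm{s}}^c\in\hat{B}^c\mid\bm{G}_1] \;=\; \mathbb{P}[\hat{\bm{s}}^c\in\hat{B}^c\mid\bm{G}_2] \;=\; \int_{s\in A_{s}} p(s\mid\bm{G}_2)\,ds
\end{equation}
for every $\bm{G}_1,\bm{G}_2\in\mathcal{G}$. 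Assumption A1 ensures that $A_{s}$ has positive probability under every $\bm{G}'\in\mathcal{G}$, so the contrapositive of the supplementary assumption forces $A_{s}=B_{\bm{s}^c}\times\mathcal{S}^u$ for some $B_{\bm{s}^c}\subset\mathcal{S}^c$. Since $\hat{B}^c$ was arbitrary, $h^{-1}$ sends every ``context slab'' in the estimated space to a product set in the true space, meaning $h_c$ is functionally independent of $\bm{s}_t^u$, i.e., $\hat{\bm{s}}^c=h_c(\bm{s}^c)$. Invertibility of $h$ then makes $h_c$ restricted to $\mathcal{S}^c$ a bijection onto $\hat{\mathcal{S}}^c$, which is precisely block-wise identifiability of $\bm{s}^c$.

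The hard part will be the measure-theoretic cleanliness of the contradiction step: if $h_c$ depends nontrivially on $\bm{s}_t^u$ at some point, I must actually exhibit an $\hat{B}^c$ whose preimage fails to be a product set in the $(\bm{s}^c,\bm{s}_t^u)$ decomposition while retaining nonzero conditional probability uniformly in $\bm{G}'$. I plan to handle this in the style of Kong et~al.\ \cite{kong2022partial}: pick a point $\hat{\bm{s}}^{c,\star}$ at which $\partial h_c/\partial \bm{s}_t^u\neq 0$, take a sufficiently small ball around it, and use the implicit function theorem to describe the preimage as a graph over $\bm{s}_t^u$ rather than a Cartesian product. Assumption A1 then supplies the uniform positive-density lower bound needed to invoke the supplementary assumption, closing the argument.
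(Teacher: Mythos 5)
Your proposal is correct and follows essentially the same route as the paper's proof: derive the $\bm{G}_t$-invariance of $\mathbb{P}[\hat{\bm{s}}^c\in\hat{B}^c\mid\bm{G}_t]$ from the matched observational distributions and the independence of $\bm{s}^c$ from $\bm{G}_t$, invoke the contrapositive of the supplementary assumption to force every preimage $h^{-1}(\hat{B}^c\times\hat{\mathcal{S}}^u)$ to be a product set, and conclude invertibility of $h_c$ on $\mathcal{S}^c$ from the non-singularity of the full Jacobian. The technical issue you flag at the end (exhibiting a positive-measure, non-product preimage when $h_c$ genuinely depends on $\bm{s}_t^u$) is exactly what the paper resolves in its Steps 2--3 by passing from singletons to open balls and splitting the integral into a vanishing product part and a necessarily nonzero remainder, so your plan closes the same gap the paper does.
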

\textcolor{red}{The proof of Lemma 3 can be found in the Appendix \ref{appendix_lemma3}. Lemma 3 shows that the context latent variables can be block-wise identifiable when the $p(s|G)$ changes sufficiently across different timestamps.}
\textcolor{red}{In summary, we can identify the user preference and context latent variables and further identify the intervention probability.}

\subsection{\textcolor{red}{ Implementation of identifiable Debiased Model-based Interactive Recommendation System}}

\textcolor{red}{Based on the theoretical results, we propose the identifiable Debiased Model-based Interactive Recommendation System (iDMIR) as shown in Figure \ref{fig:model}, which contains a debiased causal world model and a debiased contrastive policy.}

\subsubsection{Implementation of $\hat{P}(\bm{s}^u_t|\bm{s}^u_{t\!-\!1},\bm{G}_t, a_t, z_t)$.} 

We can find that $\hat{P}(\bm{s}_t^u|\bm{s}_{t\!-\!1}^u,\bm{G}_t, a_t, z_t)$ contains \textcolor{black}{a} recursive form, reflecting that the current user preference \textcolor{black}{is} controlled by the historical user's and neighbors' preference, the popularity as well as suggested items, \textcolor{black}{so we employ \textcolor{black}{a} recursive neural architecture to reconstruct the latent user preference and let $\bm{s}_0^u$ be a zero vector}. Formally, we have: 
\begin{equation}
\label{equ:s_enc}
    \bm{s}_{t}^u=f_s(\bm{s}^u_{t-1},\bm{G}_t,a_t, z_t;\theta_s),
\end{equation}
in which $\theta_s$ is the training parameter. 
$f_s(\cdot)$ consists of a self-attention network, two feed-forward networks (FFN), a gated recurrent neural network (GRU), two linear layers, and layer normalization with a hidden state size of 64 dimensions. 

Please note that Equation (\ref{equ:s_enc}) can reconstruct $\bm{s}_t^u$ recursively, so it is a special type of recursive neural \textcolor{black}{networks} and can be considered as a cell function.
We further assume that  $P(\bm{s}_t^u|\bm{s}^u_{t\!-\!1},\bm{G}_t, a_t, z_t)$ are delta distributions, so the value of $D_{KL}(\hat{P}(\bm{s}_t^u|\cdot)||P(\bm{s}_{t}^u|\cdot)$ (we ignore the conditional variables for convenience) equals to 0 \textcolor{black}{according to} Proposition \ref{proposition}. \textcolor{black}{The proof} is provided in supplementary materials.

\begin{proposition}
\label{proposition}
(\textbf{KL-Divergence under Delta Distribution Assumption.}) The KL-divergence $D_{KL}(\hat{P}(\bm{s}_t^u|\cdot)||P(\bm{s}_{t}^u|\cdot)$ is zero if $P(\bm{s}_t^u|\cdot)$ is a delta distribution with the optimal parameters $\hat{P}^*=argmax_{\hat{P}}{\mathcal{L}_{ELBO}}$. 
\end{proposition}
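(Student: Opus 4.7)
My plan is to work directly from the ELBO in Equation (\ref{equ:elbo}) rather than appeal to an external decomposition. The term $-D_{KL}\bigl(\hat{P}(\bm{s}_t^u\mid\cdot)\,\|\,P(\bm{s}_t^u\mid\cdot)\bigr)$ appears additively inside $\mathcal{L}_{ELBO}$, so maximizing the ELBO with respect to $\hat{P}$ pushes this KL summand toward its infimum. Since the KL divergence is non-negative and equals zero exactly when the two distributions coincide as measures, the question reduces to whether the variational family is rich enough to realize the equality $\hat{P}^* = P$.

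Second, I would invoke the delta-distribution assumption. Writing $P(\bm{s}_t^u\mid \bm{s}_{t-1}^u, \bm{G}_t, a_t, z_t) = \delta\bigl(\bm{s}_t^u - \mu^*\bigr)$ with $\mu^* = f_s(\bm{s}_{t-1}^u, \bm{G}_t, a_t, z_t; \theta_s^*)$, and recalling that $\hat{P}$ is parameterized by the very same deterministic cell $f_s$ defined in Equation (\ref{equ:s_enc}), the variational family contains all point masses of the form $\delta\bigl(\bm{s}_t^u - f_s(\cdot;\theta_s)\bigr)$. In particular, the delta centered at $\mu^*$ lies in the family; selecting it makes $\hat{P}^*$ and $P$ coincide as measures, so the KL summand vanishes. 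Because the remaining likelihood terms in Equation (\ref{equ:elbo}) are themselves maximized when $\hat{P}$ produces the same latent sample as the generative $P$, no competing pressure pulls $\hat{P}^*$ off this choice; hence the ELBO-optimal $\hat{P}^*$ attains $D_{KL}(\hat{P}^*\,\|\,P) = 0$.

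The main obstacle I foresee is the measure-theoretic fragility of KL divergence between Dirac measures: the usual density-ratio formula fails because absolute continuity with respect to Lebesgue measure breaks down. I would handle this by treating each delta as the $\sigma \to 0$ limit of $\mathcal{N}(\mu, \sigma^2 I)$; along this sequence $D_{KL}\bigl(\mathcal{N}(\mu^*,\sigma^2 I)\,\|\,\mathcal{N}(\mu^*,\sigma^2 I)\bigr) = 0$ holds identically in $\sigma$, so the zero KL transfers cleanly to the coincident-delta limit. Equivalently, one can invoke the Radon--Nikodym convention that $D_{KL}(\nu\,\|\,\nu) = 0$ for any measure $\nu$. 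A secondary subtlety is to argue that the ELBO optimizer genuinely drives $\theta_s$ in $\hat{P}$ to the generative $\theta_s^*$; this follows because $\hat{P}$ and $P$ share an identical functional form, and the reconstruction terms in $\mathcal{L}_{ELBO}$ reward rather than penalize the alignment established by the KL-minimizing choice.
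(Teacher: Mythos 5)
Your overall strategy differs from the paper's: you argue constructively that the variational family contains the delta at $\mu^*$, that selecting it drives the KL summand to zero, and that nothing else in $\mathcal{L}_{ELBO}$ pulls the optimizer away. The paper instead argues by contradiction: if $D_{KL}(\hat{P}\,\|\,P)\neq 0$ with $P$ a Dirac, then $\hat{P}$ must place mass where $P$ vanishes, the KL is therefore $+\infty$, and $\mathcal{L}_{ELBO}\rightarrow-\infty$, contradicting optimality. Your route has the merit of explicitly checking attainability (the family, being parameterized by the same deterministic cell $f_s$, does contain $\delta(\bm{s}_t^u-\mu^*)$ — a point the paper leaves implicit, and without which its contradiction argument would be vacuous), and your limiting-Gaussian treatment of $D_{KL}(\nu\,\|\,\nu)=0$ is a reasonable way to handle the measure-theoretic degeneracy.

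The gap is in your step asserting that ``no competing pressure pulls $\hat{P}^*$ off this choice'' because the likelihood terms are ``themselves maximized when $\hat{P}$ produces the same latent sample as the generative $P$.'' That claim is not justified and is false in general for ELBOs of this form: the expected log-likelihood term $E_{\hat{P}(\bm{s}_t^u|\cdot)}\ln P(y_t|z_t,a_t,\bm{s}_t^u,\bm{s}^c)$ is maximized by concentrating $\hat{P}$ wherever the decoder best explains $y_t$, which need not be $\mu^*$; in an ordinary VAE the optimizer routinely trades a positive KL for better reconstruction, so ``the KL appears additively with a minus sign'' does not by itself force the KL to its infimum. What actually closes the argument is precisely the observation your proposal omits and the paper's proof supplies: because $P(\bm{s}_t^u|\cdot)$ is a Dirac measure, any $\hat{P}\neq P$ is not absolutely continuous with respect to it, so the KL penalty is not merely positive but infinite, and no finite gain in the (bounded-above) reconstruction terms can compensate. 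Hence the optimum must sit exactly at $\hat{P}^*=P$. Inserting that mutual-singularity argument in place of your ``no competing pressure'' sentence would make the proof sound.
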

\begin{proof}
We are proof by contradiction. First, we suppose that $D_{KL}(Q(\bm{s}_t^u|\cdot)||P(\bm{s}_{t}^u|\cdot)\neq 0$. Then, given the delta distribution $P(\bm{s}_t^u|\cdot)$, there must exist an instance $\bm{s}_t^u$ such that $Q*(\bm{s}_t^u|\cdot)\neq0, P(\bm{s}_t^u|\cdot)=0$. It follows that $D_{KL}(Q(\bm{s}_t^u|\cdot)||P(s_{t}^u|\cdot)\rightarrow \infty$ and results in an under-optimized score $\mathcal{L}_{ELBO}\rightarrow -\infty$, which is a contradiction.
\end{proof}

\subsubsection{Implementation of $\hat{P}(\bm{s}^c|z_{t\!-\!1},a_{t\!-\!1},y_{t\!-\!1},\bm{s}^u_{t\!-\!1})$.} Similar to Equation (\ref{equ:s_enc}), we employ another  to approximate $P(\bm{s}^c|z_{t\!-\!1},a_{t\!-\!1},y_{t\!-\!1},\bm{s}^u_{t\!-\!1})$. Different from $P(\bm{s}^u_t|\cdot)$, we assume $P(\bm{s}^c)\sim \mathcal{N}(0,1)$. So we have:
\begin{equation}
\label{equ:beta}
\begin{split}
\bm{\mu}^c,\bm{\sigma}^c&=f_c(y_{t-1}, \bm{G}_{t-1}, z_{t-1}, a_{t-1};\theta_c) \\
    \bm{s}^c&=\bm{\mu}^c + \delta \bm{\sigma}^c\textcolor{black}{,}
\end{split}
\end{equation}
in which $\theta_c$ is the training parameter; $\bm{\mu}^c$ and $\bm{\sigma}^c$ denote the mean and variance. $\delta$ is a noise variable $\delta\sim\mathcal{N}(0, 1)$.
$f_c(\cdot)$ consists of a self-attention network, an FFN, two linear layers, and layer normalization with a hidden state size of 64 dimensions.

\subsubsection{Implementation of $P(y_{t-1}|z_{t-1},a_{t-1},\bm{s}_{t-1}^u)$ and $P(y_{t}|z_{t},a_{t},\bm{s}_{t-1}^u,\bm{s}^c)$.} 

Since $P(y_{t-1}|z_{t-1}, a_{t-1},\bm{s}_{t-1}^u, \bm{s}^c)$ and $P(y_{t}|z_{t}, a_{t},\bm{s}_{t-1}^u, \bm{s}^c)$ share \textcolor{black}{a} similar form, we also employ the same neural architecture to reconstruct the feedback. Formally, we have:
\begin{equation}
\label{equ:pred_y}
    y_t=f_y(z_t, a_t, \bm{s}_t^u, \bm{s}^c;\theta_y),
\end{equation}
in which $\theta_y$ is the training parameter. When predicting $y_t$, we take $\bm{s}^c$ as input, when predicting $y_{t-1}$, we take a zero vector to replace $\bm{s}^c$. 
In summary, we can learn the debiased causal world model by optimizing Equation (\ref{equ:elbo}). During inference, we first reconstruct the $\bm{s}_t^u$ and $\bm{s}^c$, and then predict the feedback of users given $z_t$ and $a_t$ as shown in Equation (\ref{equ:sketch_proof}).

\subsection{Debiased Contrastive Policy}
\textcolor{black}{It is important to build negative samples for \textcolor{black}{recommendation algorithms}. There are mainly two approaches: using the true negative samples and taking the unknown samples as negative samples. On one hand, we train \textcolor{black}{the model} with the true negative samples, but the limited negative samples have \textcolor{black}{exposure} bias caused by the previous recommendation model, and make it hard to aggregate the ideal state for the policy. On the other hand, the unknown samples are treated as negative ones, which is unreasonable, since we do not know the users' preference \textcolor{black}{for} the unobserved items.}

To address these \textcolor{black}{problems}, we propose a debiased contrastive policy, which not only \textcolor{black}{avoids} the sampling bias but also well \textcolor{black}{aggregates} the information \textcolor{black}{about} what the users like and dislike. In detail, we first split a historical selected item sequence into a positive sequence and a negative sequence. For example, we split $seq=(a_{t-3}^+, a_{t-2}^-, a_{t-1}^+)$ into a positive sequence $seq^+=(a_{t-3}^+, a_\emptyset, a_{t-1}^+)$ and a negative sequence $seq^-=(a_\emptyset, a_{t-2}^-, a_\emptyset)$, where $a^+, a^-, a_\emptyset$ denotes the positive, negative and empty items respectively. After that, we can obtain the \textcolor{black}{contrastive} $o_{t-1}$ with a shared gated recurrent unit (GRU) \cite{chung2014empirical} function shown in Equation ( \ref{equ:policy_state}).
\begin{equation}
\label{equ:policy_state}
    o_{t-1} = o_{t-1}^+-o_{t-1}^-=GRU(seq^+)- GRU(seq^-),
\end{equation}
where $o_{t-1}$ denotes the policy state at $t\!-\!1$ timestamp. And we formalize the Q-function as follows:
\begin{equation}
\label{equ:debiased_q}
    Q(o_{t-1}, a_{t-1};\theta_q)=e^{a_t^\mathsf{T} (o_{t-1})},
\end{equation}
where $\theta_q$ is the training parameters of the state value function. This Q-function consists of a self-attention network, two FFNs, a GRU, four linear layers, and layer normalization with a hidden state size of 64 dimensions. In addition, we follow the paradigm of DQN \cite{mnih2015human} and train the Q-network by minimizing the following loss function:
\begin{equation}
\begin{split}
\mathcal{L}_{\theta_q}=\mathbb{E}_{(\bm{o}_{t}, a_{t}, y_{t}, \bm{o}_{t+1})}[(r_t-Q(o_{t}, a_{t};\theta_q))^2],
    r_t=y_{t}+\gamma \max_aQ(o_{t+1}, a_{t+1};\theta_q),
\end{split}
\end{equation}
in which the $r_t$ is the target value and $\theta_q$ is the training parameter of the Q-network. We can also follow the paradigm of double-DQN \cite{van2016deep} and rewrite the target values.

\textcolor{black}{In order to \textcolor{black}{prove} that the Q function in Equation (\ref{equ:debiased_q}) is \textcolor{black}{unbiased}, we raise Proposition \ref{the_id2} based on the unbiased loss in contrastive learning \cite{chuang2020debiased}. This theoretical result shows that the implementation of the debiased contrastive policy can favor the learning of \textcolor{black}{unbiased} item embedding and further the learning of the debiased policy.}

\begin{definition}
(\textbf{Unbaised Loss in contrastive Learning} \cite{chuang2020debiased}) We let $(a,o^+)$ and $(a,o^-)$ be the similar and dissimilar pairs respectively, and the unbiased loss can be formalized as $\mathcal{L}_u= -\log \frac{e^{a_t^\mathsf{T} o_t^+}}{e^{a_t^\mathsf{T} o_t^+}+ e^{a_t^\mathsf{T} o_t^-}}$.
\end{definition}

\begin{proposition}
\label{the_id2}
(\textbf{\textcolor{black}{Unbiased} Q-function}) Given positive trajectory and negative trajectory, Equation (\ref{equ:debiased_q}) can estimate debiased results with the optimal parameters $Q^*=\arg\min_Q\mathcal{L}_{\theta_q}$.
\begin{proof}
Since $\text{sigmoid}(\cdot)$ is a monotone increasing function, we have $e^{a_t^\mathsf{T} (o_t^+-o_t^-)} \propto \log(\text{sigmoid}(e^{a_t^\mathsf{T} (o_t^+-o_t^-)}))$. Moreover, we have:
\begin{equation}
\label{equ:const}
\begin{split}
    -\log(\text{sigmoid}(e^{a_t^\mathsf{T} (o_t^+-o_t^-)}))=-\log\frac{e^{a_t^\mathsf{T} o_t^+-a_t^\mathsf{T} o_t^-}}{e^{a_t^\mathsf{T} o_t^+-a_t^\mathsf{T} o_t^-}+1}=-\log \frac{e^{a_t^\mathsf{T} o_t^+}}{e^{a_t^\mathsf{T} o_t^+}+ e^{a_t^\mathsf{T} o_t^-}}
\end{split}
    ,
\end{equation}
Hence, minimizing $\mathcal{L}_{\theta_q}$ equals to employ the unbias loss in contrastive learning.
\end{proof}
\end{proposition}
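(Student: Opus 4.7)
The plan is to verify that optimizing the Q-function in Equation (\ref{equ:debiased_q}), which is expressed in terms of $o_{t-1}=o_{t-1}^{+}-o_{t-1}^{-}$ from Equation (\ref{equ:policy_state}), is equivalent (up to a monotone transformation) to optimizing the unbiased contrastive loss $\mathcal{L}_u$ defined in the preceding definition. First, I would substitute the decomposition of $o_{t-1}$ into the Q-value to get $Q(o_{t-1},a_{t-1};\theta_q)=\exp\!\bigl(a_t^{\mathsf{T}}(o_{t-1}^{+}-o_{t-1}^{-})\bigr)$. This already exposes the score as a contrast between a ``liked'' sub-state embedding and a ``disliked'' sub-state embedding projected along the action embedding $a_t$, which is structurally the same object that appears inside $\mathcal{L}_u$.

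Second, I would exploit the fact that the sigmoid function is strictly monotonically increasing, so composing it with the Q-score preserves the $\arg\min$/$\arg\max$ set over $\theta_q$. Applying $-\log\sigma(\cdot)$ to the exponentiated score gives
\[
-\log\sigma\!\bigl(e^{a_t^{\mathsf{T}}(o_t^{+}-o_t^{-})}\bigr) = -\log\frac{e^{a_t^{\mathsf{T}} o_t^{+}-a_t^{\mathsf{T}} o_t^{-}}}{e^{a_t^{\mathsf{T}} o_t^{+}-a_t^{\mathsf{T}} o_t^{-}}+1},
\]
and multiplying numerator and denominator by $e^{a_t^{\mathsf{T}} o_t^{-}}$ recovers exactly $-\log\dfrac{e^{a_t^{\mathsf{T}} o_t^{+}}}{e^{a_t^{\mathsf{T}} o_t^{+}}+e^{a_t^{\mathsf{T}} o_t^{-}}}=\mathcal{L}_u$. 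Because this transformation is order-preserving, every optimum $Q^{*}=\arg\min_{Q}\mathcal{L}_{\theta_q}$ produces the same embedding geometry as $\arg\min\mathcal{L}_u$, which is the debiased contrastive objective; hence $Q^{*}$ inherits the ``unbiased'' guarantee from contrastive learning.

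The main obstacle I expect is bridging the TD regression form of $\mathcal{L}_{\theta_q}$, whose target is $r_t+\gamma\max_{a}Q(o_{t+1},a_{t+1};\theta_q)$, with the log-ratio form of $\mathcal{L}_u$. The algebraic equivalence above is clean on a per-sample score level, but the Bellman target adds a bootstrap term that is not obviously a contrastive log-softmax. I would therefore need to argue either (i) that at the optimum the fixed-point solution aligns the sign of $o_t^{+}-o_t^{-}$ with the binary reward $y_t$ on positive versus negative trajectories so that the two loss surfaces share minimizers, or (ii) that applying the monotone $-\log\sigma$ wrapper is legitimate because it is a sample-wise reparameterization of the score that does not alter the policy-greedy ordering used inside $\max_a Q$. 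Making this link rigorous, rather than the closing algebra, is where the real work lies.
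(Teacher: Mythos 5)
Your proposal takes essentially the same route as the paper: expose the score $a_t^{\mathsf{T}}(o_t^+-o_t^-)$, wrap it in the monotone $-\log(\mathrm{sigmoid}(\cdot))$, and recover $-\log\frac{e^{a_t^{\mathsf{T}}o_t^+}}{e^{a_t^{\mathsf{T}}o_t^+}+e^{a_t^{\mathsf{T}}o_t^-}}=\mathcal{L}_u$ by multiplying through by $e^{a_t^{\mathsf{T}}o_t^-}$ — this is exactly Equation (\ref{equ:const}). The "main obstacle" you flag is real, but you should know the paper does not resolve it either: its proof stops at the same algebraic identity and then simply asserts that minimizing the TD regression loss $\mathcal{L}_{\theta_q}$ "equals" employing $\mathcal{L}_u$, with no argument connecting the Bellman fixed point (squared error against $y_t+\gamma\max_a Q$) to the minimizer of the contrastive log-softmax; so your candidate repairs (i) and (ii) go beyond what the paper establishes rather than filling a step it actually carries out. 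One shared slip worth fixing in your write-up: both you and the paper write $\mathrm{sigmoid}\bigl(e^{a_t^{\mathsf{T}}(o_t^+-o_t^-)}\bigr)$, but the identity $-\log\mathrm{sigmoid}(u)=-\log\frac{e^{u}}{e^{u}+1}$ requires the sigmoid to be applied to the exponent $u=a_t^{\mathsf{T}}(o_t^+-o_t^-)$ itself, not to $e^{u}$; likewise the opening claim that $e^{u}$ is "proportional to" $\log\mathrm{sigmoid}(e^{u})$ should be stated as a monotone (order-preserving) relation, not a proportionality.
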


According to proposition \ref{the_id2}, we can find that predicting the feedback by simply subtracting the negative state from the positive state is equivalent to the unbiased contrastive loss. So optimizing $\mathcal{L}_{\theta_q}$ is propitious to learning \textcolor{black}{an} unbiased embedding of $a_t$.

\subsection{Model Summary}
By combining the debiased causal world model and the debiased contrastive policy, we train the whole model under the framework of model-based reinforcement learning as shown in Algorithm \ref{algorithm}. The training procedure can be summarized into four steps. First, we pretrain the debiased causal world model with the logged dataset. 
Second, for each user, the debiased contrastive policy chooses items $a_t$ under the $\epsilon$-greedy policy and \textcolor{black}{interacts} with the debiased causal world model to collect simulated trajectory $\tau^u$. Third, we use the simulated trajectory $\tau^u$ to optimize the state value function. Forth, since it is hard to obtain \textcolor{black}{an} ideal pretrained debiased causal world model, we further finetune it after optimizing the state value function.

\begin{algorithm}[htb]    
	\caption{debiased model-based interactive recommendation ({DMIR}).}
	\label{algorithm}
        \textbf{Initialization:} \\
          \hspace*{1em} set $\leftarrow$  \{$\mathcal{D}, \mathcal{K}_c, \mathcal{K}_q$.\} \\
         \hspace*{1em} \{$k,u,i$\} $\leftarrow 1$ \\
        \textbf{Repeat} \\ 
          \hspace*{1em} $\{\! {w_t}, a_t,{y_t},{w_{t+1}}\!\}$
          $\leftarrow$ $\mathcal{D}$  \\
          \hspace*{1em} optimize $\theta_s,\theta_c,\theta_y$ by minimizing $\mathcal{L}_{ELBO}$.\\
        \textbf{Until} $k$ = $\mathcal{K}_c$\\
        \textbf{While} (not convergence):\\
        \hspace*{1em} Initialize a empty simulated trajectory 
        set $\mathcal{D}_{s}=\emptyset$\;\\
        \hspace*{1em} \textbf{Repeat} \\
         \hspace*{2em} Collect $\tau^u$ for each user by interacting with the debiased causal world model and add it into $\mathcal{D}_{s}$; \\
          \hspace*{1em}\textbf{Until} $u = {N}$\\
           \hspace*{1em}\textbf{Repeat} \\
          \hspace*{2em} $\{\! {w_t}, a_t,{y_t},{w_{t+1}}\!\}$ $\leftarrow$ $\mathcal{D}_{s}$ \\
          \hspace*{2em} optimize $\theta_q$ by minimizing $\mathcal{L}_{\theta_q}$.\\
          \hspace*{1em} \textbf{Until} $i$ = $\mathcal{K}_q$\\
            \hspace*{1em}\textbf{Repeat} \\
            \hspace*{2em} $\{\! {w_t}, a_t,{y_t},{w_{t+1}}\!\}$ $\leftarrow$ $\mathcal{D}_s$ \\
            \hspace*{2em} optimize $\theta_s,\theta_c,\theta_y$ by minimizing ELBO.\\
            \hspace*{1em} \textbf{Until} $i$ = $\mathcal{K}_c$\\
            \textbf{End}

\end{algorithm}

\section{Experiments}

\subsection{Setup}
\subsubsection{Datasets}
In order to evaluate the performance of our method, we conduct experiments on three published datasets (including Ciao, Epinions, and Yelp) with explicit feedback.

\begin{itemize}
\item \textbf{Ciao} is a published dataset for social recommendation. The source cites of Ciao allows users to add friends to their `Circle of Trust', and rate items.

\item\textbf{Epinions} is a benchmark dataset for the social recommendation In Epinions, a user can rate and give comments on items. Besides, a user can also select other users as their trusters. Note that we treat the trust graphs as social networks.

\item\textbf{Yelp} is an online review platform where users review local businesses (e.g., restaurants and shops). The user-item interactions and the social networks are extracted in the same way as Epinions.
\end{itemize}

\subsubsection{Evaluation Setting and Metrics}
To evaluate the performance of recommendation algorithms based on reinforcement learning, one should evaluate the trained policy through the online A/B test. But it is too difficult and expensive to conduct the experiment on the real platform. Hence we follow \cite{chen2019large, dulac2015deep,rohde2018recogym,zou2020pseudo} and simulate the real-world environment with the logging data. 
In detail, we simulate the ground truth item embedding and user \textcolor{black}{preference}. For the ground truth item embedding, we follow \cite{zou2020pseudo} and use a standard rank-H-restricted matrix factorization model \cite{rendle2012bpr} to train the ground truth user embedding $\bm{u}_g$ and the item embedding $\bm{a}_g$, i.e., $P(1|\bm{u}_g, \bm{a}_g)=\text{sigmoid}(\bm{u}_g^\mathsf{T}\bm{a}_g)$. 
Since the users might \textcolor{black}{lose} patience when similar items are repeatedly recommended, the probability that user $\bm{u}_g$ will buy $\bm{a}_g$ can be formalized as $P(1|\bm{u}_g, \bm{a}_g)=\text{sigmoid}(\bm{u}_g^\mathsf{T}\bm{a}_g)*\alpha^{c}$, in which $\alpha$ is the decay rate of interest and $c$ denotes how many times user $\bm{u}$ has been recommended item $\bm{a}_g$.

To measure the performance of recommendation algorithms, we consider the widely-used evaluation metrics like Hit \textcolor{black}{Ratio} (HR) and Normalized Discounted Cumulative Gain (NDCG) within the top-K positions (we report K=20 and K=50 in our experiments). In addition, we use the diversity to measure the richness of recommendation results and the F-measure as a combined metric of diversity and hit \textcolor{black}{ratio}. 

\subsubsection{Implementation Details}
We use the Pytorch framework to implement all the methods and deploy them on NVIDIA A100. The hyperparameters used in all the datasets are shown in Table 1, in which buffer size denotes the size of the experience replay buffer of reinforcement learning; update size denotes the policy will be updated when the amount of data in the experience replay buffer is greater than the update size; $\gamma$ denotes the discount factor to balance the immediate rewards and future rewards; target update denotes the update frequency of the target network in the policy; memory size denotes the number of items that the policy can memorize; epsilon start denotes the value of $\epsilon$ at the beginning of the experiment. We repeat each experiment over 5 random seeds.

\begin{table}[]
\centering
\caption{Hyperparameters setting under three datasets.}
\label{tab:hyper}
\resizebox{0.4\textwidth}{!}{%
\begin{tabular}{c|ccc}
\hline
& Ciao  & Epinions & Yelp  \\ \hline
learning rate & 0.001 & 0.001    & 0.001 \\
batch size    & 1024  & 1024     & 1024  \\
buffer size   & 50000 & 50000    & 50000 \\
update size   & 10000 & 10000    & 10000 \\
$\gamma$        & 0.95  & 0.95     & 0.95  \\
target update & 1000  & 1000     & 1000  \\
droprate      & 0.3   & 0.3      & 0.3   \\
dimension     & 64    & 64       & 64    \\
memory size   & 20    & 20       & 20    \\
epsilon start & 0.3   & 0.5      & 0.7   \\ \hline
\end{tabular}%
}
\end{table}

\subsubsection{Compared Methods}
We compare the proposed debiased model-based interactive recommendation model with model-free methods like \textbf{DQN-r} \cite{zhao2018recommendations}, \textbf{DoubleDQN-r} \cite{van2016deep}, \textbf{GAIL} \cite{GAIL}, \textbf{DC} \cite{xiao2021general} and consider model-based methods including \textbf{SOFA}, \cite{10.1145/3383313.3412252}, \textbf{DEMER} \cite{DEMER} and \textbf{PDQ} \cite{zou2020pseudo}.
We also consider the bandit-based method \textbf{HATCH} \cite{yang2020hierarchical}. 
Apart from this, the recommendation method based on graph neural networks like \textbf{LightGCN} \cite{he2020lightgcn} is considered. We also consider the causality-based methods: Popularity-bias Deconfounding and Adjusting \textbf{(PDA)} \cite{zhang2021causal}, IPS Estimator \textbf{(MF-IPS)}  \cite{schnabel2016recommendations} and doubly robust estimator \textbf{(MRDR-DL)} \cite{10.1145/3404835.3462917}.

\subsection{Experiment Results}
In this subsection, we conduct experiments to answer the following three \textbf{R}esearch \textbf{Q}uestions (\textbf{RQ}): \textbf{RQ1: }How is the performance of the proposed method compared with existing methods? \textbf{RQ2: }How the usage of unknown samples in negative sampling aggravates the impact of bias?  Can the proposed debiased contrastive policy mitigate the harmful effect caused by the negative bias?  \textbf{RQ3: }Can the proposed debiased causal world model be equipped \textcolor{black}{with} other reinforcement learning \textcolor{black}{methods} and promote their \textcolor{black}{performances}?  

\begin{table}[t]
\centering
\caption{Experiment results of Ciao Datasets.}
\label{tab:ciao}
\resizebox{0.8\textwidth}{!}{%
\begin{tabular}{c|cccccc}
\toprule
Methods  & F-measure & Diversity & HR@20  & HR@50  & NDCG@20 & NDCG@50 \\ \midrule
LightGCN & -         & -         & 0.2279 & 0.0909 & 0.3399  & 0.1845  \\
PDA      & -         & -         & 0.2659 & 0.1064 & 0.3928  & 0.2144  \\
MF-IPS   & -         & -         & 0.2635 & 0.1039 & 0.3902  & 0.2123  \\
MRDR-DL  & -         & -         & 0.2675 & 0.1047 & 0.3972  & 0.2174  \\
DGRec    & -        & -    & 0.1973 & 0.2587 & 0.3376  & \textbf{0.4692}
\\
HATCH    & 0.2054    & 0.1438    & 0.3593 & 0.3484 & 0.3912  & 0.3689  \\
SOFA     & 0.1251    & 0.0770    & 0.3340 & 0.1752 & 0.3867  & 0.2437  \\
DC       & 0.1343    & 0.0833    & 0.3458 & 0.1924 & 0.3906  & 0.2543  \\
DQN-r    & 0.1177    & 0.0721    & 0.3208 & 0.1706 & 0.3715  & 0.2352  \\
DDQN-r   & 0.1460    & 0.0914    & 0.3623 & 0.2042 & 0.4025  & 0.2661  \\
PDQ      & 0.2106    & 0.1418    & 0.4091 & 0.2859 & 0.4335  & 0.3307  \\
DEMER    & 0.2341    & 0.1549    & 0.4789 & 0.3531 & 0.4888  & 0.3913  \\
GAIL     & 0.2314    & 0.1545    & 0.4605 & 0.3228 & 0.4763  & 0.3682  \\ 
GSA-M    & 0.3134    & 0.3019    & 0.3594 & 0.1716 & 0.4132  & 0.2342
\\ \hline
DMIR &
  \textbf{0.4076} &
  \textbf{0.3440} &
  \textbf{0.5000} &
  \textbf{0.3722} &
  \textbf{0.5244} &
  0.4183 \\ \hline
\end{tabular}%
}
\end{table}

\begin{table}[]
\centering
\caption{Experiment results of Epinion Dataset.}
\label{tab:epinion}
\resizebox{0.8\textwidth}{!}{%
\begin{tabular}{c|cccccc}
\hline
Methods & F-measure       & Diversity       & HR@20           & HR@50           & NDCG@20         & NDCG@50         \\ \hline
LightGCN & -      & -      & 0.2198 & 0.0873 & 0.3250 & 0.1770 \\
PDA      & -      & -      & 0.2053 & 0.0821 & 0.3032 & 0.1655 \\
MF-IPS   & -      & -      & 0.1966 & 0.0766 & 0.2912 & 0.1579 \\
MRDR-DL  & -      & -      & 0.2224 & 0.0871 & 0.3298 & 0.1786 \\
DGRec    & -      & -      & 0.1916 & 0.2709 & 0.3162 & 0.4828 \\
HATCH    & 0.3511 & 0.2761 & 0.4819 & 0.4825 & 0.4817 & 0.4823 \\
SOFA     & 0.1325 & 0.0854 & 0.2953 & 0.1612 & 0.3530 & 0.2266 \\
DC       & 0.1915 & 0.1321 & 0.3481 & 0.2029 & 0.3867 & 0.2605 \\
DQN-r    & 0.1427 & 0.0955 & 0.2822 & 0.1588 & 0.3390 & 0.2210 \\
DDQN-r   & 0.2347 & 0.1713 & 0.3727 & 0.2416 & 0.4099 & 0.2953 \\
PDQ      & 0.2932 & 0.2028 & 0.5288 & 0.3962 & 0.5329 & 0.4324 \\
DEMER    & 0.2933 & 0.2221 & 0.4316 & 0.3641 & 0.4399 & 0.3865 \\
GAIL     & 0.2938 & 0.2252 & 0.4226 & 0.3677 & 0.4308 & 0.3864 \\ 
GSA-M    & 0.3213 & 0.2916 & 0.3677 & 0.2066 & 0.4342 & 0.2401 \\ \hline
DMIR    & \textbf{0.5176} & \textbf{0.4727} & \textbf{0.5719} & \textbf{0.5203} & \textbf{0.5794} & \textbf{0.5379} \\ \hline
\end{tabular}%
}
\end{table}

\begin{table}[]
\centering
\caption{Experiment results of Yelp Dataset.}
\label{tab:yelp}
\resizebox{0.8\textwidth}{!}{%
\begin{tabular}{c|cccccc}
\hline
Methods & F-measure       & Diversity & HR@20           & HR@50           & NDCG@20         & NDCG@50         \\ \hline
LightGCN & -      & -      & 0.1797 & 0.0712 & 0.2656 & 0.1445 \\
PDA      & -      & -      & 0.2836 & 0.1134 & 0.4188 & 0.2286 \\
MF-IPS   & -      & -      & 0.1984 & 0.0776 & 0.2938 & 0.1595 \\
MRDR-DL  & -      & -      & 0.1791 & 0.0701 & 0.2653 & 0.1436 \\
DGRec    & -    & -    & 0.2351 & 0.3968 & 0.3539 & \textbf{0.6847} \\
HATCH    & 0.4475 & 0.4219 & 0.4763 & 0.4756 & 0.4756 & 0.4748 \\
SOFA     & 0.3421 & 0.2949 & 0.4073 & 0.3200 & 0.4421 & 0.3619 \\
DC       & 0.2530 & 0.2023 & 0.3376 & 0.1983 & 0.3751 & 0.2539 \\
DQN-r    & 0.2487 & 0.1971 & 0.3372 & 0.1948 & 0.3747 & 0.2510 \\
DDQN-r   & 0.3703 & 0.3478 & 0.3991 & 0.3543 & 0.4154 & 0.3755 \\
PDQ      & 0.4337 & 0.3301 & 0.6319 & 0.5856 & 0.6129 & 0.5872 \\
DEMER    & 0.4287 & 0.4006 & 0.4610 & 0.4598 & 0.4634 & 0.4615 \\
GAIL     & 0.4468 & \textbf{0.4233} & 0.4731 & 0.4728 & 0.4731 & 0.4731 \\ 
GSA-M    & 0.3707 & 0.3174 & 0.3826 & 0.2478 & 0.4459 & 0.3078 \\\hline
DMIR    & \textbf{0.4546} & 0.3523    & \textbf{0.6406} & \textbf{0.6057} & \textbf{0.6306} & 0.6091 \\ \hline
\end{tabular}%
}
\end{table}

\subsubsection{Debiased Performance (RQ1)}
In this part, we first investigate the debiased performance of the proposed DMIR method, which is shown in Table 1. According to the experiment results, we can find that: 

\begin{itemize}
    \item 
    \textcolor{black}{DMIR} outperforms the other methods on most of the metrics with a large \textcolor{black}{room for} improvement, which is attributed to both the debiased causal world model and the debiased contrastive policy. 
    \item The model-based methods perform better, since these models simulate the debiased estimation and benefit the policy learning.
    \textcolor{black}{Note that DMIR outperforms GAIL in nearly all metrics, except the diversity, }showing that DMIR can well keep a balance between diversity and accuracy.
    \item Among the static methods, PDA \textcolor{black}{achieves} better results than LightGCN, verifying the efficacy of deconfounding. Other causality-based methods do not perform well, this is because the propensity score is hard to 
    \textcolor{black}{be estimated} and might be influenced by the high variance and latent confounders. 
    \item Note that we do not report the F-measure and the diversity for the static methods since these methods might repeatedly recommend similar items in multi-turn interactions, their performance on diversity can be very low and close to zero.
\end{itemize}

Moreover, we also provide qualitative results as shown in Figure \ref{fig:exp_reward}. According to the experiment results, we can find that {iDMIR} outperforms the other methods on the cumulative reward curve, which is attributed to both the debiased causal world model and the debiased contrastive policy. 


\begin{figure*}
\centering
\subfloat[HR@20]{
\centering
\includegraphics[width=0.34\textwidth]{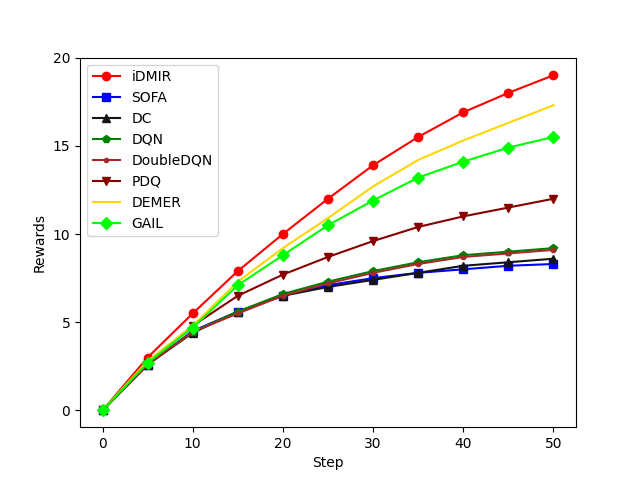}
}
\hspace{-7mm}
\subfloat[HR@50]{
\centering
\includegraphics[width=0.34\textwidth]{{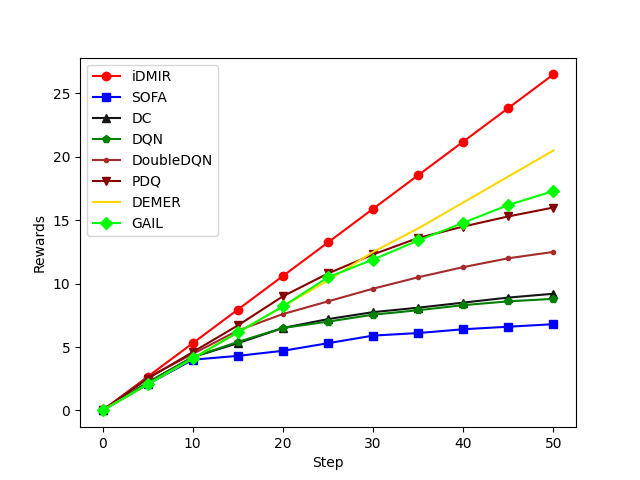}}
}
\hspace{-7mm}
\subfloat[NDCG@20]{
\centering
\includegraphics[width=0.34\textwidth]{{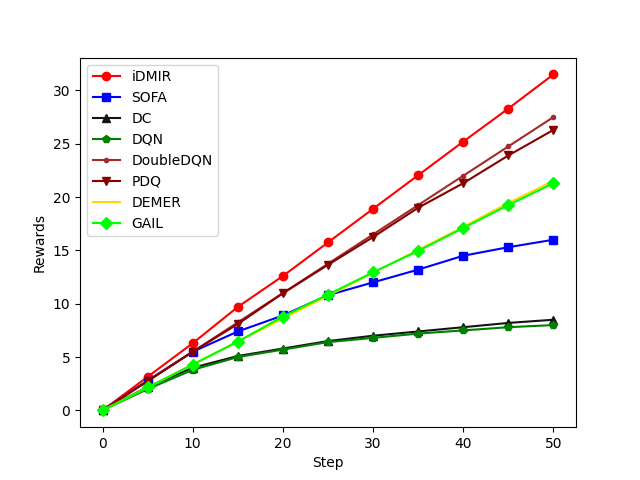}}
}
\hspace{-7mm}
\caption{Evaluation of reward curve results on different datasets.}
\label{fig:exp_reward}
\end{figure*}

\subsubsection{Efficacy of the Debiased contrastive Policy (RQ2)}

In this part, we aim to verify the effectiveness of the debiased contrastive policy. We first answer the question of how the usage of unknown samples in negative sampling aggravates the bad impact of bias. Hence we testify \textcolor{black}{to this} assumption on three reinforcement \textcolor{black}{learning based} methods by keeping and removing the negative sampling. According to Figure 4 on Ciao dataset, most of the reinforcement \textcolor{black}{learning based} methods that remove negative sampling perform better than \textcolor{black}{those} that keep negative sampling, validating that the 
\textcolor{black}{involving} 
unknown samples in negative sampling leads to sampling bias and further degenerates the model performance.

Then we further validate the effectiveness of the debiased contrastive policy. Since both the \textcolor{black}{debiased causal} world model and debiased contrastive policy can mitigate the bad influence of bias, we compare the DMIR-D (the standard DMIR model without debiased causal world model) with the model-free based methods for excluding the effect of world models. 
According to the experiment results on the Ciao and Epinions dataset shown in Figure 3, we find that the debiased contrastive policy outperforms the other model-free based methods, reflecting that the debiased contrastive policy can well aggregate the historical item information. 

\subsubsection{Efficacy of the Counterfactual World Model (RQ3)}


\begin{figure*}
\centering
\subfloat[HR@20]{
\centering
\includegraphics[width=0.23\textwidth]{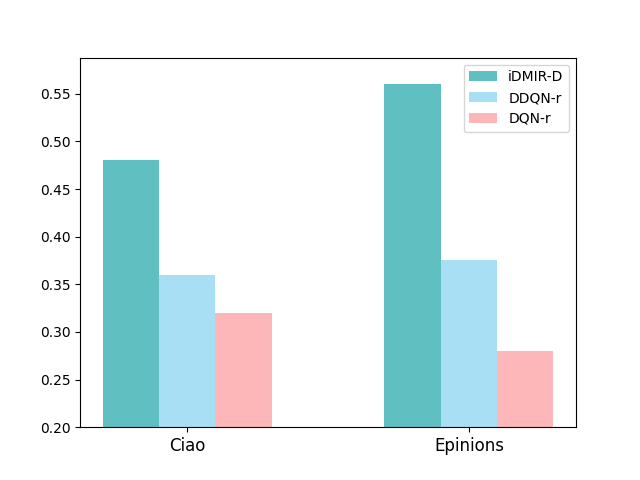}
}\hspace{-4mm}
\subfloat[HR@50]{
\centering
\includegraphics[width=0.23\textwidth]{{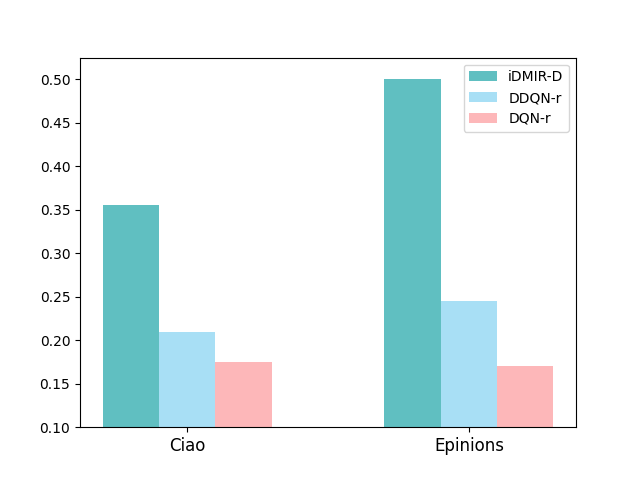}}
}\hspace{-4mm}
\subfloat[NDCG20]{
\centering
\includegraphics[width=0.23\textwidth]{{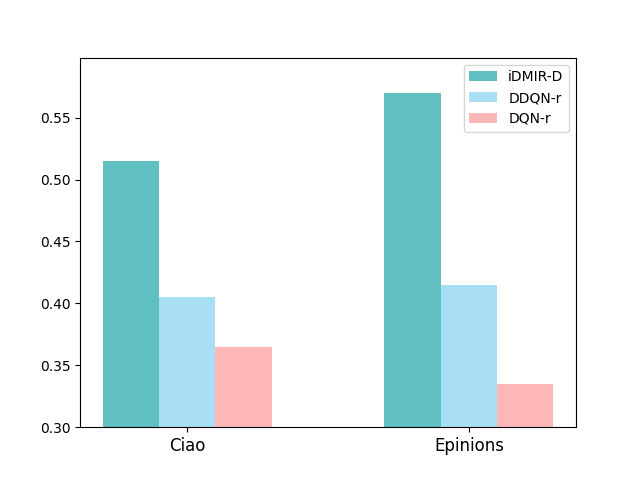}}
}\hspace{-4mm}
\subfloat[NDCG@50]{
\centering
\includegraphics[width=0.23\textwidth]{{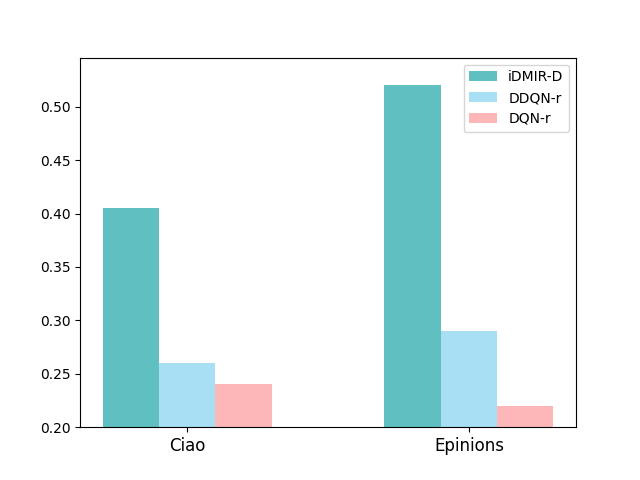}}
}\hspace{-4mm}
\caption{Experiment results among DMIR-D and other model-free based methods}
\end{figure*}



\begin{figure*}[t]
\centering
\subfloat[HR@20]{
\centering
\includegraphics[width=0.23\textwidth]{{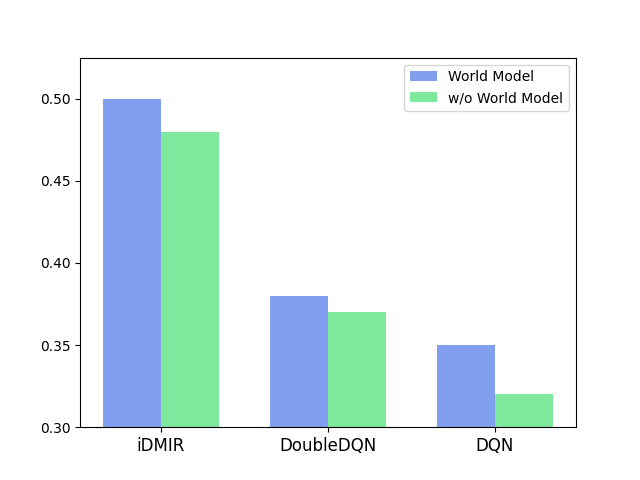}}
}\hspace{-4mm}
\subfloat[HR@50]{
\centering
\includegraphics[width=0.23\textwidth]{{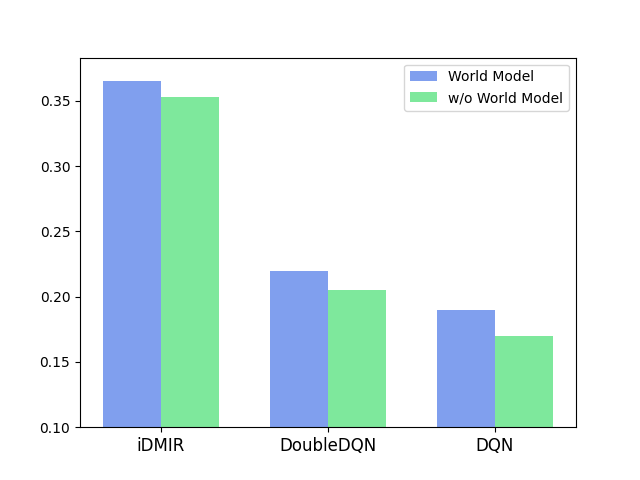}}
}\hspace{-4mm}
\subfloat[NDCG@20]{
\centering
\includegraphics[width=0.23\textwidth]{{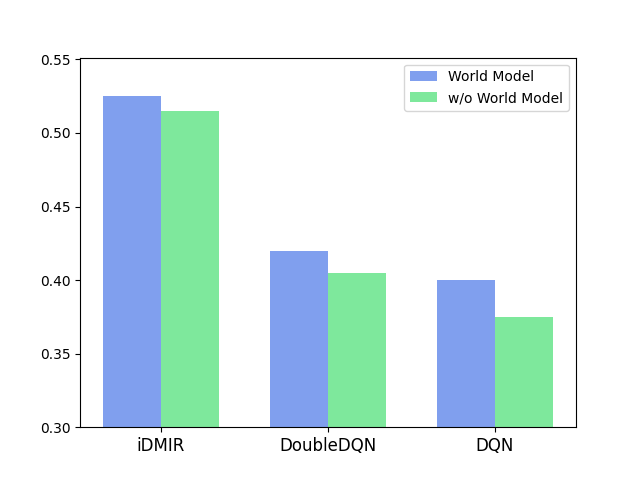}}
}\hspace{-4mm}
\subfloat[NDCG@50]{
\centering
\includegraphics[width=0.23\textwidth]{{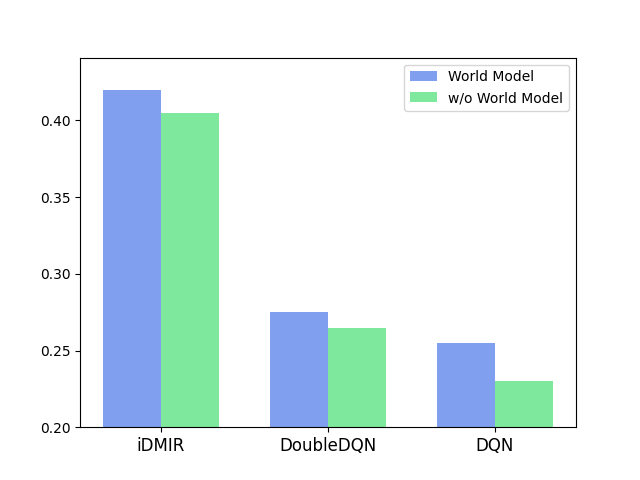}}
}\hspace{-4mm}

\centering
\subfloat[HR@20]{
\centering
\includegraphics[width=0.23\textwidth]{{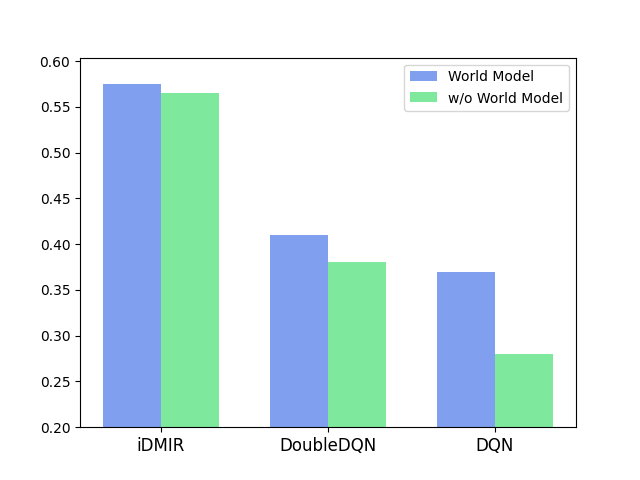}}
}\hspace{-4mm}
\subfloat[HR@50]{
\centering
\includegraphics[width=0.23\textwidth]{{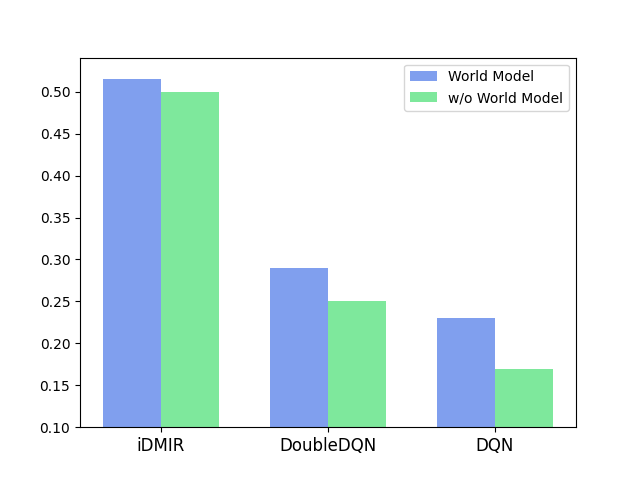}}
}\hspace{-4mm}
\subfloat[NDCG@20]{
\centering
\includegraphics[width=0.23\textwidth]{{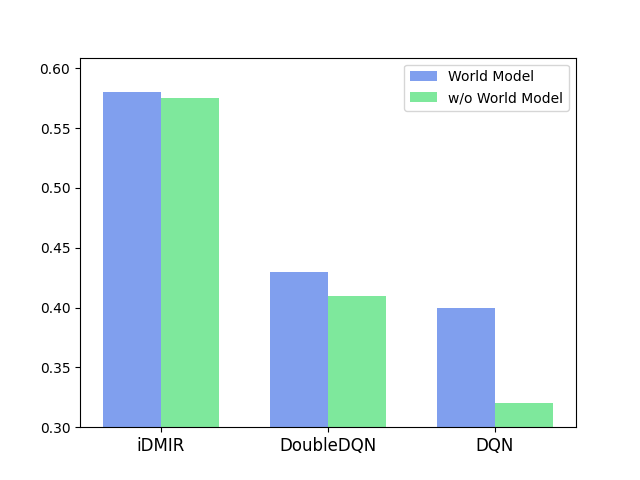}}
}\hspace{-4mm}
\subfloat[NDCG@50]{
\centering
\includegraphics[width=0.23\textwidth]{{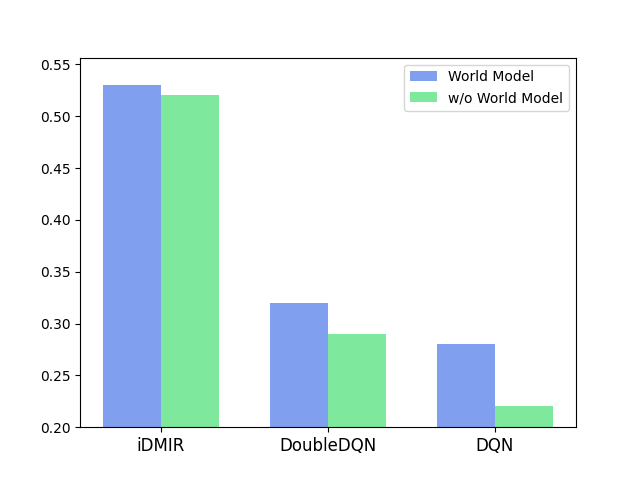}}
}\hspace{-4mm}
\caption{Experiments results between methods that use or remove the debiased causal world model. }
\end{figure*}

To verify the efficacy of the debiased causal world model, we consider it as an off-the-shelf plug-in. Hence we combine the debiased causal world model with the conventional model-free methods like DQN. According to the experiment results shown in Figure 5 (Figure 5 (a)-(d) and Figure 5 (e)-(h) are results of different metrics from Ciao and Epinions), we \textcolor{black}{learn} the following lessons: 1) Compared \textcolor{black}{with} the standard DMIR model and DMIR without world model, the standard DMIR model always performs better than \textcolor{black}{that without} the debiased causal world model. 2) the performance of the model-free based method \textcolor{black}{has been improved }after combining with the debiased causal world model, showing that debiased causal world model can be taken as a flexible plug-in. 

\section{Conclusion}
This paper presents a debiased model-based interactive recommendation model under the offline scenario. For one thing, to \textcolor{black}{simultaneously} consider the dynamics of popularity and remove the bad impact of popularity bias, the proposed method \textcolor{black}{develops} a debiased causal world model based on the causal mechanism. In addition, the identification theory guarantees the feasible \textcolor{black}{unbiased} estimation. For the other thing, to remove the sample bias of negative sampling, we devise the debiased contrastive policy, which coincides with the debiased contrastive learning. The success of the proposed model not only \textcolor{black}{provides} an effective solution for the model-based interactive recommendation problem\textcolor{black}{,} but also \textcolor{black}{provides} \textcolor{black}{an} off-the-shelf plug-in to enhance the performance of model-free methods.

\appendix
\section{Proof of Identification of Debiased Estimation}\label{appendix_the1}
\renewcommand\thetheorem{\arabic{theorem}}
\begin{theorem2}
(\textbf{Identification of Debiased Estimation in Interactive Recommendation}) Suppose that the joint distribution $P(\bm{G}_t, a_t, z_t, y_t, \bm{G}_{t\!-\!1}, a_{t\!-\!1}, z_{t\!-\!1}, y_{t\!-\!1}, \bm{s}_{t\!-\!1}^u)$ is recovered, then the Equation (\ref{equ:do_cal}) can be estimated under the causal model shown in Figure 2 (a).
\end{theorem2}

\begin{proof2}
We prove that  $P(y_t|y_{t\!-\!1}, \bm{G}_t, \bm{G}_{t\!-\!1}, z_t, z_{t\!-\!1}, a_{t\!-\!1}, \bm{s}_{t\!-\!1}^u, do(a_t))$ is identifiable under the premise of the theorem with the help of the following equation:
\begin{equation}
\label{equ:iden1}
\small
\begin{split}
     &P(y_t|y_{t\!-\!1}, \bm{G}_t, \bm{G}_{t\!-\!1}, z_t, z_{t\!-\!1}, a_{t\!-\!1}, \bm{s}_{t\!-\!1}^u, do(a_t))\\
     =&\int P(y_t,\bm{s}_t^u,\beta|C)d\bm{s}_t^u d\bm{s}^c\\
     =&\int \!\!P(y_t|\bm{s}_t^u, \bm{s}^c, C)P(\bm{s}_t^u|\bm{s}^c, C)P
     (\bm{s}^c|C)d\bm{s}_t^u d\bm{s}^c\\
     \overset{(i)}{=}&\int \!\! P(y_t|\bm{s}^c, z_t, a_t, \bm{s}_t^u)P(\bm{s}_t^u|\bm{s}_{t\!-\!1}^u,\bm{G}_t,a_{t},z_{t})P(\bm{s}^c|y_{t\!-\!1}, \bm{s}_{t\!-\!1}^u, z_{t\!-\!1}, a_{t\!-\!1})d\bm{s}_t^u d\bm{s}^c
 \end{split}
\end{equation}
where $C=\{y_{t\!-\!1}, \bm{G}_t, \bm{G}_{t\!-\!1}, z_t, z_{t\!-\!1}, a_{t\!-\!1}, \bm{s}_{t\!-\!1}^u, do(a_t)\}$ and equality $\overset{(i)}{=}$ is by the rule of do-calculus applied to the causal graph shown in Figure 2 (a).
\end{proof2}

\section{Proof of Evidence Lower Bound}
\label{appendix_elbo}
    
\textcolor{black}{The joint distribution $P(\bm{G}_t,a_{t},z_{t},y_{t},\bm{G}_{t\!-\!1},a_{t\!-\!1},z_{t\!-\!1},y_{t\!-\!1},\bm{s}_{t\!-\!1}^u)$ can be modeled by optimizing the ELBO as shown in Equation (\ref{equ:elbo_appendix}).}
    \begin{equation}
        \begin{split}
            &\ln  P(\bm{G}_t,a_{t},z_{t},y_{t},\bm{G}_{t\!-\!1},a_{t\!-\!1},z_{t\!-\!1},y_{t\!-\!1},\bm{s}_{t\!-\!1}^u)\geq ELBO\\
        &ELBO = -D_{KL}(\hat{P}(\bm{s}_{t}^u|\bm{s}_{t\!-\!1}^u,\bm{G}_t,z_{t},a_{t})||P(\bm{s}_{t}^u|\bm{s}_{t\!-\!1}^u,\bm{G}_t,z_{t},a_{t}))\\
    &\quad-D_{KL}(\hat{P}(\bm{s}^c|z_{t\!-\!1},a_{t\!-\!1},y_{t\!-\!1},\bm{s}_{t\!-\!1}^u)||P(\bm{s}^c|z_{t\!-\!1},a_{t\!-\!1},y_{t\!-\!1},\bm{s}_{t\!-\!1}^u))\\
    &\quad+E_{\hat{P}(\bm{s}_t^u|\bm{s}_{t\!-\!1}^u,\bm{G}_t, a_t, z_t)}E_{\hat{P}(\bm{s}^c|z_{t\!-\!1},a_{t\!-\!1},y_{t\!-\!1},\bm{s}_{t\!-\!1}^u)}\ln P(y_{t}|z_{t},a_{t},\bm{s}_{t}^u,\bm{s}^c)\\
    &\quad+E_{\hat{P}(\bm{s}_t^u|\bm{s}_{t\!-\!1}^u,\bm{G}_t, a_t, z_t)}E_{\hat{P}(\bm{s}^c|z_{t\!-\!1},a_{t\!-\!1},y_{t\!-\!1},\bm{s}_{t\!-\!1}^u)}\ln P(y_{t\!-\!1}|z_{t\!-\!1},a_{t\!-\!1},\bm{s}_{t\!-\!1}^u)\ \
        \end{split}
        \label{equ:elbo_appendix}
    \end{equation}

\begin{proof}
The proof of the ELBO is composed of three steps. First, we factorize the conditional distribution according to the Bayes theorem.
    \begin{equation}
        \begin{split}
            &\ln  P(\bm{G}_t,a_{t},z_{t},y_{t},\bm{G}_{t\!-\!1},a_{t\!-\!1},z_{t\!-\!1},y_{t\!-\!1},\bm{s}_{t\!-\!1}^u)\\
            =&\ln \frac{P(\bm{G}_t,a_{t},z_{t},y_{t},\bm{G}_{t\!-\!1},a_{t\!-\!1},z_{t\!-\!1},y_{t\!-\!1},\bm{s}_t^u,\bm{s}_{t\!-\!1}^u,\bm{s}^c)}{P(\bm{s}_t^u,\bm{s}^c|\bm{G}_t,a_{t},z_{t},y_{          t},\bm{G}_{t\!-\!1},a_{t\!-\!1},z_{t\!-\!1},y_{t\!-\!1},\bm{s}_{t\!-\!1}^u)}\\
            =&\ln \frac{P(\bm{G}_t,a_{t},z_{t},y_{t},\bm{G}_{t\!-\!1},a_{t\!-\!1},z_{t\!-\!1},y_{t\!-\!1},\bm{s}_t^u,\bm{s}_{t\!-\!1}^u,\bm{s}^c)}{P(\bm{s}^c|\bm{s}_t^u,a_{t},z_{t},y_{t},\bm{s}_{t\!-\!1}^u,a
            _{t\!-\!1},z_{t\!-\!1},y_{t\!-\!1})P(\bm{s}_t^u|\bm{s}_{t\!-\!1}^u,\bm{G}_t,a_{t},z_{t},y_{t})}
            \\
        \end{split}\nonumber
    \end{equation}
Second, we add the expectation operator on both sides of the equation and reformalize the equation as follows:
    \begin{equation}
        \begin{split}
            &\ln  P(\bm{G}_t,a_{t},z_{t},y_{t},\bm{G}_{t\!-\!1},a_{t\!-\!1},z_{t\!-\!1},y_{t\!-\!1},\bm{s}_{t\!-\!1}^u)\\
            =&D_{KL}(\hat{P}(\bm{s}_t^u|\bm{s}_{t\!-\!1}^u,\bm{G}_t, a_t, z_t)||P(\bm{s}_t^u|\bm{s}_{t\!-\!1}^u,\bm{G}_t,a_{t},z_{t},y_{t}))\\
            &+D_{KL}(\hat{P}(\bm{s}^c|z_{t\!-\!1},a_{t\!-\!1},y_{t\!-\!1},\bm{s}_{t\!-\!1}^u)||P(\bm{s}^c|\bm{s}_t^u,a_{t},z_{t},y_{t},\bm{s}_{t\!-\!1}^u,a_{t\!-\!1},z_{t\!-\!1},y_{t\!-\!1}))\\
            &+\ln \frac{P(\bm{G}_t,a_{t},z_{t},y_{t},\bm{G}_{t\!-\!1},a_{t\!-\!1},z_{t\!-\!1},y_{t\!-\!1},\bm{s}_t^u,\bm{s}_{t\!-\!1}^u,\bm{s}^c)}{\hat{P}(\bm{s}_t^u|\bm{s}_{t\!-\!1}^u,\bm{G}_t, a_t, z_t)
            \hat{P}(\bm{s}^c|z_{t\!-\!1},a_{t\!-\!1},y_{t\!-\!1},\bm{s}_{t\!-\!1}^u)}\\
        \end{split}\nonumber
    \end{equation}
Third, we obtain the last equality with the help of $D_{KL}(\cdot ||\cdot) \geq 0$
    \begin{equation}
        \begin{split}\nonumber
            &\ln  P(\bm{G}_t,a_{t},z_{t},y_{t},\bm{G}_{t\!-\!1},a_{t\!-\!1},z_{t\!-\!1},y_{t\!-\!1},\bm{s}_{t\!-\!1}^u)\\
            \geq& \ln \frac{P(\bm{G}_t,a_{t},z_{t},y_{t},\bm{G}_{t\!-\!1},a_{t\!-\!1},z_{t\!-\!1},y_{t\!-\!1},\bm{s}_t^u,\bm{s}_{t\!-\!1}^u,\bm{s}^c)}{\hat{P}(\bm{s}_t^u|\bm{s}_{t\!-\!1}^u,\bm{G}_t, a_t, z_t)
        \hat{P}(\bm{s}^c|z_{t\!-\!1},a_{t\!-\!1},y_{t\!-\!1},\bm{s}_{t\!-\!1}^u)}\\
        =&\ln \frac {P(y_{t}|z_{t},a_{t},\bm{s}_{t}^u,\bm{s}^c)P(\bm{G}_t,a_{t},z_{t},\bm{G}_{t\!-\!1},a_{t\!-\!1},z_{t\!-\!1},y_{t\!-\!1},\bm{s}_t^u,\bm{s}_{t\!-\!1}^u,\bm{s}^c)}{\hat{P}(\bm{s}_t^u|\bm{s}_{t\!-\!1}^u,\bm{G}_t, a_t, z_t)
        \hat{P}(\bm{s}^c|z_{t\!-\!1},a_{t\!-\!1},y_{t\!-\!1},\bm{s}_{t\!-\!1}^u)}\\
        =&\ln \frac {P(y_{t}|z_{t},a_{t},\bm{s}_{t}^u,\bm{s}^c)
            P(\bm{s}_t^u|\bm{s}_{t\!-\!1}^u,\bm{G}_t,z_t,a_t)
        P(\bm{G}_t,a_t,z_t,\bm{G}_{t\!-\!1},a_{t\!-\!1},z_{t\!-\!1},y_{t\!-\!1},\bm{s}_{t\!-\!1}^u,\bm{s}^c)}
            {\hat{P}(\bm{s}_t^u|\bm{s}_{t\!-\!1}^u,\bm{G}_t, a_t, z_t)
        \hat{P}(\bm{s}^c|z_{t\!-\!1},a_{t\!-\!1},y_{t\!-\!1},\bm{s}_{t\!-\!1}^u))}
        \\
        =& -D_{KL}(\hat{P}(\bm{s}_{t}^u|\bm{s}_{t\!-\!1}^u,\bm{G}_t,z_{t},a_{t})||P(\bm{s}_{t}^u|\bm{s}_{t\!-\!1}^u,\bm{G}_t,z_{t},a_{t}))\\
        & -D_{KL}(\hat{P}(\bm{s}^c|z_{t\!-\!1},a_{t\!-\!1},y_{t\!-\!1},\bm{s}_{t\!-\!1}^u)||P(\bm{s}^c|z_{t\!-\!1},a_{t\!-\!1},y_{t\!-\!1},\bm{s}_{t\!-\!1}^u))\\
        & +E_{\hat{P}(\bm{s}_t^u|\bm{s}_{t\!-\!1}^u,\bm{G}_t, a_t, z_t)}E_{\hat{P}(\bm{s}^c|z_{t\!-\!1},a_{t\!-\!1},y_{t\!-\!1},\bm{s}_{t\!-\!1}^u)}\ln P(y_{t}|z_{t},a_{t},\bm{s}_{t}^u,\bm{s}^c)\\
        & +E_{\hat{P}(\bm{s}_t^u|\bm{s}_{t\!-\!1}^u,\bm{G}_t, a_t, z_t)}E_{\hat{P}(\bm{s}^c|z_{t\!-\!1},a_{t\!-\!1},y_{t\!-\!1},\bm{s}_{t\!-\!1}^u)}\ln P(y_{t\!-\!1}|z_{t\!-\!1},a_{t\!-\!1},\bm{s}_{t\!-\!1}^u)\\
        & +\ln P(\bm{G}_t,a_t,z_t,\bm{G}_{t\!-\!1},a_{t\!-\!1},z_{t\!-\!1},\bm{s}_{t\!-\!1}^u).
        \end{split}
    \end{equation}
    Since $\ln P(\bm{G}_t,a_t,z_t,\bm{G}_{t\!-\!1},a_{t\!-\!1},z_{t\!-\!1},\bm{s}_{t\!-\!1}^u)$ is the joint distribution of observed variables which can be considered as a constant, we remove this term when optimizing the ELBO. Therefore, assuming $\bm{s}_0^u$ is observed and $\bm{s}_{t-1}^u$ can be calculated recursively, we finish deriving the evidence lower bound.
\end{proof}

\section{Proof of Theorem \ref{the1}}
\label{appendix_the2}
\begin{theorem2}
\label{the1}
We follow the data generation process in Figure \ref{fig:model}(a) and make the following assumptions:
\begin{itemize}
    \item A1 (\underline{Smooth and Positive Density}): The probability density function of latent variables is smooth and positive, i.e., $p(\bm{s}_t|\bm{s}_{t-1}, G_t, a_t, z_t)$ is smooth and $p(\bm{s}_t|\bm{s}_{t-1}, \bm{G}_t, a_t, z_t)>0$.
    \item A2 (\underline{Conditional Independence}): Conditioned on $\bm{G}_t$ and $\bm{s}_{t-1}$, $\bm{s}_{t,i}$ is independent of any other $\bm{s}_{t,j}$ for $i,j \in [n], i \neq j$, i.e., $\log p(\bm{s}_t|\bm{s}_{t-1}, \bm{G}_t, a_t, z_t)=\sum_i^n q_i(\bm{s}_{t,i},\bm{s}_{t-1}, G_t, a_t, z_t)$, where $q_i$ denotes the log density of the conditional distribution, i.e., $q_i:=\log p(\bm{s}_{t, i}|\bm{s}_{t-1}, \bm{G}_t, a_t, z_t)$
    \item A3 (\underline{Linear Independence}): For any $\bm{s}_t\in \mathcal{Z}_t \subseteq \mathbb{R}^n$, there exist $2n+1$ values of $\bm{G}_t$, i.e., $\bm{G}_{t,j}$ with $j=0,1,\cdots,2n$, such that the $2n$ vectors $\mathrm{w}(\bm{s}_t,\bm{s}_{t-1},G_{t,j},a_t,z_t)-\mathrm{w}(\bm{s}_t,\bm{s}_{t-1},G_{t,0},a_t,z_t)$ with $j=1,\cdots,2n$, are linearly independent, where vector $\mathrm{w}(\bm{s}_t,\bm{s}_{t-1},G_{t,j},a_t,z_t)$ is defined as follows:
    \begin{equation}
    \begin{split}
        \mathrm{w}(\bm{s}_t,\bm{s}_{t-1},G_{t,j},a_t,z_t)=\left(\frac{\partial q_{0}(\bm{s}_{t,0},\bm{s}_{t-1},\bm{G}_t,a_t,z_t)}{\partial \bm{s}_{t,0}},\cdots, \frac{\partial q_{n-1}(\bm{s}_{t,n-2},\bm{s}_{t-1},\bm{G}_t,a_t,z_t)}{\partial \bm{s}_{t,n-1}},\cdots, \right .\\\left . \frac{\partial^2 q_{0}(\bm{s}_{t,0},\bm{s}_{t-1},\bm{G}_t,a_t,z_t)}{\partial \bm{s}_{t,0}^2}, \cdots \frac{\partial^2 q_{n-1}(\bm{s}_{t,n-2},\bm{s}_{t-1},\bm{G}_t,a_t,z_t)}{\partial \bm{s}_{t,n-1}^2} \right).
    \end{split}
    \end{equation}
\end{itemize}
By learning the data generation process, ${\bm{s}}_t$ is component-wise identifiable.
\end{theorem2}
\begin{proof}
Our proof of component-wise identifiability starts from deriving relations on estimated latent space from observational equivalence, i.e., the joint distribution of the observed variables $p_{\hat{g},\hat{\rho},{\hat{p}}_{\beta}}(a_t, z_t, y_t, \bm{G}_t, a_{t-1}, z_{t-1}, y_{t-1}, \bm{G}_{t-1})$ matches $p_{{g},{\rho},{{p}}_{\beta}}(a_t, z_t, y_t, \bm{G}_t, a_{t-1}, z_{t-1}, y_{t-1}, \bm{G}_{t-1})$ everywhere. 
Let $g$ and $\hat{g}$ be the true and estimated reward generation function, respectively. Then we can rewrite the reward generation function $y_t=(g \circ g^{-1} \circ \hat{g})(\bm{s}^u_t,a_t,z_t,\bm{s}^c)$ because of injective properties of $g,\hat{g}$, we can see that $\hat{g}=g\circ \left(\left(g\right)^{-1}\circ \hat{g}\right)=g \circ h$ for some function $h=(g)^{-1} \circ \hat{g}$ on the latent space. 
According to the data generation process in Figure \ref{fig:model}, we can develop the relationship between $s$ and $\hat{s}$ with the help of the change of variables formula as follows:
\begin{equation}
\label{equ:the1_0}
    p(\bm{s}_t^u,\bm{s}_{t-1}^u, \bm{G}_t)=p(h^{-1}(\hat{\bm{s}}_t^u),h^{-1}(\hat{\bm{s}}_{t-1}^u), \bm{G}_t)|\text{det}\frac{\partial h^{-1}(\hat{\bm{s}}_t^u)}{\partial \bm{s}_t^u}||\text{det}\frac{\partial h^{-1}(\hat{\bm{s}}_{t-1}^u)}{\partial \bm{s}_{t-1}^u}|,
\end{equation}
\begin{equation}
\label{equ:the1_1}
    p(\bm{s}_t^u)=p(h^{-1}(\hat{\bm{s}}_t^u))|det\frac{\partial h^{-1}(\hat{\bm{s}}_t^u)}{\partial \bm{s}_t^u}|,
\end{equation}
\begin{equation}
\label{equ:the1_3}
    p(\bm{s}_{t-1}^u, \bm{G}_t)=p(h^{-1}(\hat{\bm{s}}_{t-1}^u))|det\frac{\partial h^{-1}(\hat{\bm{s}}_{t-1}^u)}{\partial \bm{s}_{t-1}^u} |.
\end{equation}
Solving for the determinant terms in Equation (\ref{equ:the1_1}) and Equation (\ref{equ:the1_3}) and plugging them into Equation (\ref{equ:the1_0}), we have:
\begin{equation}
    p(\bm{s}_t^u|\bm{s}_{t-1}^u,\bm{G}_t)=p(h^{-1}(\hat{\bm{s}}_t^u)|h^{-1}(\hat{\bm{s}}_{t-1}^u), G_t)\frac{p(\bm{s}_t^u)}{p(h^{-1}(\bm{s}_t^u))}.
\end{equation}
Sequentially, we define $q(\bm{s}_t^u)=\log p(\bm{s}_t^u|\bm{s}_{t-1}^u,\bm{G}_t)$ as the marginal log-density of the components $\bm{s}_t^u$, and $\overline{q}(\bm{s}_t^u)=\log p(\bm{s}_t^u)$ We then have:
\begin{equation}
    q(\bm{s}_t^u|\bm{s}_{t-1}^u,\bm{G}_t)-q(h^{-1}(\hat{\bm{s}}_t^u)|h^{-1}(\hat{\bm{s}}_{t-1}^u),\bm{G}_t)=\overline{q}(\bm{s}_t^u) - \overline{q}(h^{-1}(\bm{s}_t^u)).
\end{equation}

Based on the conditional independence assumption, we can further have:
\begin{equation}
\label{equ:the_1-4}
    \sum_i \left(q_i(\bm{s}_t^u|\bm{s}_{t-1}^u,\bm{G}_t)-q_i(h^{-1}(\hat{\bm{s}}_t^u)|h^{-1}(\hat{\bm{s}}_{t-1}^u),\bm{G}_t)\right)=\overline{q}(\bm{s}_t^u) - \overline{q}(h^{-1}(\hat{\bm{s}}_t^u)).
\end{equation}

For the ease of exposition, we adopt the following notation:
\begin{equation}
    q_i^1(\bm{s}_t^u|\bm{s}_{t-1}^u,\bm{G}_t)=\frac{\partial q_i(s_{t,i}^u|\bm{s}_{t-1}^u,\bm{G}_t)}{\partial s_{t,i}^u}, \quad 
    q_i^2(\bm{s}_t^u|\bm{s}_{t-1}^u,\bm{G}_t)=\frac{\partial^2 q_i(s_{t,i}^u|\bm{s}_{t-1}^u,\bm{G}_t)}{\partial (s_{t,i}^u)^2},
\end{equation}
and take derivatives of both sides of the Equation (\ref{equ:the_1-4}) with respective $\bm{s}_{t,j}^u$ and we have:
\begin{equation}
    q_j^1(\bm{s}_{t,j}^u|\bm{s}_{t-1}^u,\bm{G}_t)-\sum_{i=n_c+1}^n q_i^1(h^{-1}_i (\bm{s}_t^u)|h^{-1}(\bm{s}_{t-1}^u),\bm{G}_t)\frac{\partial h_i^{-1}(\bm{s}_t^u)}{\partial \bm{s}_{t, j}^u}=\frac{\partial \overline{q}_{\bm{s}_t^u}}{\partial \bm{s}_{t,j}^u} - \frac{\partial \overline{q}(h^{-1}(\hat{\bm{s}}_t^u))}{\partial \bm{s}_{t,j}^u} 
\end{equation}
Then we take another derivative with respect to $\bm{s}_{t,k}^u$, where $j \neq k$, and we have:
\begin{equation}
\begin{split}
    \sum_{i=n_c+1}^n \left( q_i^2(h^{-1}_i (\bm{s}_t^u)|h^{-1}(\bm{s}_{t-1}^u),\bm{G}_t)\frac{\partial h_i^{-1}(\bm{s}_t^u)}{\partial \bm{s}_{t, j}^u} \frac{\partial h_i^{-1}(\bm{s}_t^u)}{\partial \bm{s}_{t,k}^u} + q_i^1(h^{-1}_i (\bm{s}_t^u)|h^{-1}(\bm{s}_{t-1}^u),\bm{G}_t)\frac{\partial^2 h_i^{-1}(\bm{s}_t^u)}{\partial \bm{s}_{t, j}^u \partial s_{t, k}^u}  \right) =\\ \frac{\partial^2 \overline{q}_{\bm{s}_t^u}}{\partial \bm{s}_{t,j}^u \partial \bm{s}_{t,k}^u} - \frac{\partial^2 \overline{q}(h^{-1}(\hat{\bm{s}}_t^u))}{\partial \bm{s}_{t,j}^u \partial \bm{s}_{t,k}^u},
\end{split}
\end{equation}
where $\frac{\partial^2 \overline{q}_{\bm{s}_t^u}}{\partial \bm{s}_{t,j}^u \partial \bm{s}_{t,k}^u} - \frac{\partial^2 \overline{q}(h^{-1}(\hat{s}_t^u))}{\partial \bm{s}_{t,j}^u \partial \bm{s}_{t,k}^u}$ does not depend on $\bm{G}_t$. Therefore, for $\bm{G}_t=\bm{G}_t^0, \cdots, \bm{G}_t^{2n_u}$, we have $2n_u+1$ such equations. Subtracting each equation corresponding to $G_1,\cdots,G_{2n_u}$ with the equation corresponding to $G_0$ results in $2n_u$ equations:
\begin{equation}
    \begin{split}
    \sum_{i=n_c+1}^n \left( \left(q_i^2(h^{-1}_i (\bm{s}_t^u)|h^{-1}(\bm{s}_{t-1}^u),\bm{G}_t^j) - q_i^2(h^{-1}_i (\bm{s}_t^u)|h^{-1}(\bm{s}_{t-1}^u),\bm{G}_t^0)\right)\frac{\partial h_i^{-1}(\bm{s}_t^u)}{\partial \bm{s}_{t, j}^u} \frac{\partial h_i^{-1}(\bm{s}_t^u)}{\partial \bm{s}_{t,k}^u}\right. \\+ \left.\left(q_i^1(h^{-1}_i (\bm{s}_t^u)|h^{-1}(\bm{s}_{t-1}^u),\bm{G}_t^j)-q_i^1(h^{-1}_i (\bm{s}_t^u)|h^{-1}(\bm{s}_{t-1}^u),\bm{G}_t^0)\right)\frac{\partial^2 h_i^{-1}(\bm{s}_t^u)}{\partial \bm{s}_{t, j}^u \partial s_{t, k}^u}  \right) =0.
\end{split}
\end{equation}
Under the linear independence condition in A3, the linear system is a $2n_u \times 2n_u$ full-rank system. Therefore, $\frac{\partial h_i^{-1}(\bm{s}_t^u)}{\partial \bm{s}_{t, j}^u} \frac{\partial h_i^{-1}(\bm{s}_t^u)}{\partial \bm{s}_{t,k}^u}=0$ and $\frac{\partial^2 h_i^{-1}(\bm{s}_t^u)}{\partial \bm{s}_{t, j}^u \partial s_{t, k}^u}=0$ are the only solutions. Note that $\frac{\partial h_i^{-1}(\bm{s}_t^u)}{\partial \bm{s}_{t, j}^u} \frac{\partial h_i^{-1}(\bm{s}_t^u)}{\partial \bm{s}_{t,k}^u}=0$ means that there are at most one non-zero entry in each row in the Jacobian matrix $J_h$. Since $h$ is invertible, there is only one non-zero entry in each row in the Jacobian matrix $J_h$, meaning that $s_u^t$ is component-wise identification.
\end{proof}

\section{Proof of Lemma \ref{lemma}}
\label{appendix_lemma3}
\begin{lemma2} \cite{kong2022partial}
We follow the data generation process in Figure \ref{fig:model}(a) and make assumptions A1-A3. Moreover, we make the following assumption. For any set $A_{s}\in \mathcal{S}$ with the following two properties:
\begin{itemize}
    \item  $A_{s}$ has nonzero probability measure, i.e., $\mathbb{P}[\{s\in A_{s}\}|\{\bm{G}_t=\bm{G}'_t\}]>0$ for any $\bm{G}'_t \in \mathcal{G}$.
    
    \item $A_{s}$ cannot be expressed as $B_{\bm{s}^c} \times \mathcal{S}^{u}$ for any $B_{\bm{s}^c} \subset \mathcal{S}^{c}$.
\end{itemize}
$\exists \bm{G}_1,\bm{G}_2 \in \mathcal{G}$, such that 
\begin{equation}
    \int_{s\in A_{s}} P(s|\bm{G}_1)ds \neq \int_{s\in A_{s}} P(s|\bm{G}_2)ds .
\end{equation}
By modeling the data generation process, $\bm{s}^c$ is block-wise identifiable.
\end{lemma2}

\begin{proof}
We split the proof into four steps for better understanding.

In Step1, we leverage the properties of the data generation process and the marginal distribution matching condition to express the marginal invariance with the indeterminacy transformation $\overline{h}:\mathcal{S} \rightarrow \mathcal{S}$ between the estimated and the true latent variables. The introduction of $\overline{h}(\cdot)$ allows us to formalize the block-wise identifiability condition.

In Step 2 and Step 3, we show that the estimated context latent variables $\hat{\bm{s}}^c$ do not depend on the true user preference variables $\bm{s}_{t}^u$, that is, $\overline{h}(\bm{s}^c)$ does not depend on the input $\bm{s}_{t}^u$. To this end, in Step 2, we derive its equivalent statements which can ease the rest of the proof and avert technical issues (e.g., sets of zero probability measures). In Step 3, we prove the equivalent statement by contradiction. Specifically, we show that if $\hat{\bm{s}}^c$ depends on $\bm{s}_{t}^u$, the invariance derived in Step 1 would break.

In Step 4, we use the conclusion in Step 3, the smooth and bijective properties of $h(\cdot)$, and the conclusion in Theorem \ref{the1}, to show the invertibility of the indeterminacy function between the context variables, i.e., the mapping $\hat{\bm{s}}^c=\overline{h}_{c}(\bm{s}^c)$ being invertible.

\noindent\textbf{Step 1}. According to the data generation process in Figure \ref{fig:model}(a), $\bm{s}^c$ is independent of $\bm{G}_t$, it follows that for any $A_{\bm{s}^c} \subseteq \mathcal{S}^{c}$,
\begin{equation}
\label{equ:lam1}
\begin{split}
    \mathbb{P}[\{\hat{g}^{-1}_{1:n_{c}}(\hat{y}|a_t,z_t) \in A_{\bm{s}^c}\}|\{\bm{G}_t=\bm{G}_t^1\}]&=\mathbb{P}[\{\hat{g}^{-1}_{1:n_{c}}(\hat{y}|a_t,z_t) \in A_{\bm{s}^c}\}|\{\bm{G}_t=\bm{G}_t^2\}], \forall \bm{G}_t^1,\bm{G}_t^2 \in \mathcal{G}\\ &\Longleftrightarrow\\
    \mathbb{P}[\{\hat{y}_t \in (\hat{g}^{-1}_{1:n_{c}})^{-1}(A_{\bm{s}^c}|a_t,z_t)\}|\{\bm{G}_t=\bm{G}_t^1\}]&=\mathbb{P}[\{\hat{y}_t \in (\hat{g}^{-1}_{1:n_{c}})^{-1}(A_{\bm{s}^c}|a_t,z_t)\}|\{\bm{G}_t=\bm{G}_t^2\}], \forall \bm{G}_t^1,\bm{G}_t^2 \in \mathcal{G},
\end{split}
\end{equation}
where $\hat{g}^{-1}_{1:n_{c}}: \mathcal{Y}\rightarrow \mathcal{S}^{c}$ denotes the estimated transformation from the reward to the context latent variables and $(\hat{g}^{-1}_{1:n_{c}})^{-1}(A_{\bm{s}^c}|a_t,z_t) \subseteq \mathcal{Y}$ is the preimage set of $A_{\bm{s}^c}$, that is, the set of estimated reward $\hat{y}_t$ originating from context variables $\hat{\bm{s}}^c$ in $A_{\bm{s}^c}$.

Because of the matching observation distributions between the estimated model and the true model in Equation (\ref{equ:gen}), the relation in Equation (\ref{equ:lam1}) can be extended to observation $y_t$ from the true generating process, i.e.,
\begin{equation}
\label{equ:lam2}
\begin{split}
    \mathbb{P}[\{y_t \in (\hat{g}^{-1}_{1:n_{c}})^{-1}(A_{\bm{s}^c}|a_t,z_t)\}|\{\bm{G}_t=\bm{G}_t^1\}]&=\mathbb{P}[\{y_t \in (\hat{g}^{-1}_{1:n_{c}})^{-1}(A_{\bm{s}^c}|a_t,z_t)\}|\{\bm{G}_t=\bm{G}_t^2\}], \\&\Longleftrightarrow\\
    \mathbb{P}[\{\hat{g}^{-1}_{1:n_{c}}(\hat{y}|a_t,z_t))^{-1}\in A_{\bm{s}^c}\}|\{\bm{G}=\bm{G}_t^1\}]&=\mathbb{P}[\{\hat{g}^{-1}_{1:n_{c}}(\hat{y}|a_t,z_t))^{-1}\in A_{\bm{s}^c}\}|\{\bm{G}=\bm{G}_t^2\}].
\end{split}
\end{equation}

Since $g$ and $\hat{g}$ are smooth and injective, there exists a smooth and injective $\overline{h}=\hat{g}^{-1} \circ g:\mathcal{S}\rightarrow\mathcal{S}$. We note that by definition $\overline{h}=h^{-1}$ where $h$ is introduced in the proof of Theorem \ref{the1}. Expressing $\hat{g}^{-1}=\overline{h}\circ g^{-1}$ and $\overline{h}_{c}:\mathcal{S}\rightarrow \mathcal{S}^c$ in Equation (\ref{equ:lam2}) yields
\begin{equation}
\label{equ:lam3}
\begin{split}
    \mathbb{P}[\{\overline{h}_{c}(s)\in A_{\bm{s}^c}\}|\{\bm{G}=\bm{G}_t^1\}]&=\mathbb{P}[\{\overline{h}_{c}(s)\in A_{\bm{s}^c}\}|\{\bm{G}=\bm{G}_t^2\}]\\&\Longleftrightarrow\\\mathbb{P}[\{\bm{s}^c\in \overline{h}_{c}^{-1}(s)\in A_{\bm{s}^c}\}|\{\bm{G}=\bm{G}_t^1\}]&=\mathbb{P}[\{\bm{s}^c\in \overline{h}_{c}^{-1}(s)\in A_{\bm{s}^c}\}|\{\bm{G}=\bm{G}_t^2\}]\\&\Longleftrightarrow \\\int_{s\in \overline{h}^{-1}_{c}(A_{\bm{s}^c})}p(s|\bm{G}_t^1)ds&=\int_{s\in \overline{h}^{-1}_{s}(A_{\bm{s}^c})}p(s|\bm{G}_t^2)ds,
\end{split}
\end{equation}
where $\overline{h}_{c}^{-1}(A_{\bm{s}^c})=\{s\in \mathcal{S}: \overline{h}_c(s)\in A_{\bm{s}^c}\}$ is the preimage of $A_{\bm{s}^c}$, i.e., those latent variables containing context variables in $A_{\bm{s}^c}$ after indeterminacy transformation $h$.

Based on the generating process in Equation \ref{equ:gen}, we can rewrite Equation (\ref{equ:lam3}) as follows:
\begin{equation}
\label{equ:lam4}
    \forall A_{\bm{s}^c} \subseteq \mathcal{S}^c, \int_{[{s_{t}^u}^{\top},{\bm{s}^c}^{\top}]^{\top} \in \overline{h}_c^{-1}(A_{\bm{s}^c})}p(\bm{s}^c)\left(p(\bm{s}_{t}^u|\bm{G}_t^1)-p(\bm{s}_{t}^u|\bm{G}_t^2)\right)d\bm{s}^c d\bm{s}_{t}^u = 0.
\end{equation}

\noindent\textbf{Step 2}. In order to show the block-identifiability of $\bm{s}^c$, we would like to prove that $\bm{s}^c:=\overline{h}_c([{\bm{s}^c}^{\top},{s_{t}^u}^{\top}]^{\top})$ does not depend on $s_{t}^u$. To this end, we first develop one equivalent statement (i.e., Statement 3 below) and prove it later step instead. By doing so, we are able to leverage the full-supported density function assumption to avert technical issues.
\begin{itemize}
    \item \textbf{Statement 1}: $\overline{h}_c([{\bm{s}^c}^{\top},{s_{t}^u}^{\top}]^{\top})$ does not depend on $\bm{s}^u$.
    \item \textbf{Statement 2}: $\forall \bm{s}^c\in \mathcal{S}^c$, it follows that $\overline{h}_c^{-1}(\bm{s}^c)=B_{\bm{s}^c}\times \mathcal{S}^{u}$, where $B_{\bm{s}^c}\neq \emptyset$ and $B_{\bm{s}^c} \subseteq \mathcal{S}^c$.
    \item \textbf{Statement 3}: $\forall \bm{s}^c \in \mathcal{S}^c, r\in \mathbb{R}^+$, it follows that $\overline{h}_c^{-1}(\mathcal{B_r}(\bm{s}^c))=B_{\bm{s}^c}^+\times\mathcal{S}^u$, where $\mathcal{B}_r(\bm{s}^c):=\{{\bm{s}^c}'\in \mathcal{S}^c: ||{\bm{s}^c}'-\bm{s}^c||^2<r, B_{\bm{s}^c}^+\neq \emptyset\}$, and $B_{\bm{s}^c}^+\subseteq \mathcal{S}^c$.
\end{itemize}

Statement 2 is a mathematical formulation of Statement 1. Statement 3 generalizes singletons $\bm{s}^c$ in Statement 2 to open, non-empty balls $\mathcal{B}_r(\bm{s}^c)$. Later, we use Statement 3 in Step 3 to show the contraction to the Equation (\ref{equ:lam4}).

Using the continuity of $\overline{h}_c(\cdot)$, we can show the equivalence between Statement 2 and Statement 3 as follows. We first show that Statement 2 implies Statement 3. $\forall \bm{s}^c \in \mathcal{S}^c, r\in \mathbb{R}^+, \overline{h}_c^{-1}({\bm{s}^c}')=B_{\bm{s}^c}'\times\mathcal{S}^{u}$, thus the union $\overline{h}_c^{-1}(\mathcal{B}_r(\bm{s}^c))$ also satisfies this property, which is Statement 3.

Then we show that Statement 3 implies Statement 2 by contradiction. Suppose that Statement 2 is false, then $\exists \hat{\bm{s}}^c \in \mathcal{S}^c$ such that there exist $\hat{s}^{c,B}\in\{s_{1:n_c}: s\in\overline{h}_c^{-1}(\hat{s}_c)\}$ and $\hat{s}^{u,B} \in \mathcal{S}^u$ resulting in $\overline{h}_c(\hat{s}^B)\neq \hat{s}_c$, where $\hat{s}^B=[(\hat{s}^{c,B})^{\top},(\hat{s}^{u,B})^{\top}]^{\top}$. As $\overline{h}_c(\cdot)$ is continuous, there exists $\hat{r}\in\mathbb{R}^{+}$ such that $\overline{h}_c(\hat{s}^B)\notin \mathcal{B}_{\hat{r}}(\hat{\bm{s}}^c)$. That is, $\hat{s}^B \notin h_c^{-1}(\mathcal{B}_{\hat{r}}(\hat{\bm{s}}^c))$. Also, Statement 3 suggests that $h_c^{-1}(\mathcal{B}_{\hat{r}}(\hat{\bm{s}}^c))=\hat{B}_{\bm{s}^c}\times \mathcal{S}^u$. By definition of $\hat{s}^B$, it is clear that $\hat{s}_{1:n_c}^B \in \hat{B}_{\bm{s}^c}$. The fact that $\hat{z}^B \notin h_c^{-1}(\mathcal{B}_{\hat{r}}(\hat{s}^{c}))$ contradicts Statement 3. Therefore, Statement 2 is true under the premise of Statement 3. We have shown that Statement 3 implies State 2. In summary, Statement 2 and Statement 3 are equivalent, and therefore proving Statement 3 suffices to show Statement 1.

\noindent \textbf{Step 3}. In this step, we prove Statement 3 by contradiction. Intuitively, we show that if $\overline{h}_c(\cdot)$ depended on $\hat{s}^u$, the preimage $\overline{h}_c^{-1}(\mathcal{B}_r(\bm{s}^c))$ could be partitioned into two parts (i.e. $B_s^*$ and $\overline{h}_c^{-1}(A_{\bm{s}^c}^*) \ B_s^*$ defined below.) The dependency between $\overline{h}_c(\cdot)$ and $\hat{s}^u$ is captured by $B_s^*$, which would not emerge otherwise. In contrast, $\overline{h}_c^{-1}(A_{\bm{s}^c}^*) \ B_s^*$ also exists when $\overline{h}_c(\cdot)$ does not depend on $\hat{s}^u$. We evaluate the invariance relation Equation (\ref{equ:lam4}) and show that the integral over $\overline{h}_c^{-1}(A_{\bm{s}^c}^*) \ B_{s}^*$ (i.e., $T_1$) is always 0, however, the integral over $B_s^*$ (i.e., $T_2$) is necessarily non-zero, which leads to the contraction with Equation (\ref{equ:lam4}) and thus shows the $\overline{h}_c(\cdot)$ cannot depend on $\hat{s}^u$.

First, note that because $\mathcal{B}_r(\bm{s}^c)$ is open $\overline{h}_c(\cdot)$ is continuous, the preimage $\overline{h}_c^{-1}(\mathcal{B}_r(\bm{s}^c))$ is open. In addition, the continuity of $h(\cdot)$ and the matched observation distributions $\forall \bm{G}'\in \mathcal{G}, \mathbb{P}[\{y\in A_y\}|\{\bm{G}=\bm{G}'\}]=\mathbb{P}[\{\hat{y}\in A_y\}|\{\bm{G}=\bm{G}'\}]$ lead to $h(\cdot)$ being bijection as shown in \cite{klindt2020towards}, which implies that that $\overline{h}_c^{-1}(\mathcal{B}_r(\bm{s}^c))$ is non-empty. Hence, $\overline{h}_c^{-1}(\mathcal{B}_r(\bm{s}^c))$ is both non-empty and open. Suppose that $\exists A_{\bm{s}^c}^*:=\mathcal{B}_{r^*}:=\mathcal{B}_{r^*}({\bm{s}^c}^*)$ where ${\bm{s}^c}^* \in \mathcal{S}^c, r^* \in \mathbb{R}^+$, such that ${B^s}^*=\{s\in \mathcal{S}: s\in \overline{h}_c^{-1}(A_{\bm{s}^c}^*), \{s_{1:n_c}\}\times \mathcal{S}^u\not\subseteq\overline{h}_c^{-1}(A_{\bm{s}^c}^*)\}\neq\emptyset$. Intuitively, ${B^s}^*$ contains the partition of the preimage $\overline{h}_c^{-1}(A_{\bm{s}^c}^*)$ that the user preference part $s_{n_c+1:n}$ cannot take on any value in $\mathcal{S}^u$. Only certain values of the style part were able to produce specific outputs of indeterminacy $\overline{h}_c(\cdot)$. Clearly, this would suggest that $\overline{h}_c(\cdot)$ depends on $\bm{s}^c$.  To show contraction with Equation (\ref{equ:lam4}), we evaluate the \textbf{LHS} of Equation (\ref{equ:lam4}) with such a $A_{\bm{s}^c}^*$.
\begin{equation}
\begin{split}
&\int_{[{\bm{s}^c}^{\top},{\bm{s}^u}^{\top}]^{\top}\in \overline{h}_c^{-1}(A_{\bm{s}^c}^*)} p(\bm{s}^c)\left(p(\bm{s}^u|\bm{G}_1)-p(\bm{s}^u|\bm{G}_2)\right)d\bm{s}^u d\bm{s}^c\\=&\underbrace{\int_{[{\bm{s}^c}^{\top},{\bm{s}^u}^{\top}]^{\top}\in \overline{h}_c^{-1}(A_{\bm{s}^c}^*) / {B^s}^*} p(\bm{s}^c)\left(p(\bm{s}^u|\bm{G}_1)-p(\bm{s}^u|\bm{G}_2)\right)d\bm{s}^u d\bm{s}^c }_{T_1} \\ &+ \underbrace{\int_{[{\bm{s}^c}^{\top},{\bm{s}^u}^{\top}]^{\top}\in {B^s}^*} p(\bm{s}^c)\left(p(\bm{s}^u|\bm{G}_1)-p(\bm{s}^u|\bm{G}_2)\right)d\bm{s}^u d\bm{s}^c}_{T_2}.
\end{split}
\end{equation}
We first look at the value of $T_1$. When $\overline{h}_c^{-1}(A_{\bm{s}^c}^*) /\ {B^{s}}^* = \empty$, $T_1$ evaluates to 0. Otherwise, by definition, we can rewrite $\overline{h}_c^{-1}(A_{\bm{s}^c}^*) / {B^s}^*$ as $C_{\bm{s}^c}^* \times \mathcal{S}^u$ where $C_{\bm{s}^c}^*\neq \empty$ and $C_{\bm{s}^c}^* \subset \mathcal{S}^c$. With this expression, it follows that 
\begin{equation}
\begin{split}
&\int_{[{\bm{s}^c}^{\top},{\bm{s}^u}^{\top}]^{\top}\in C_{\bm{s}^c}^*} p(\bm{s}^c)\left(p(\bm{s}^u|\bm{G}_1)-p(\bm{s}^u|\bm{G}_2)\right)d\bm{s}^c d\bm{s}^u\\=&\int_{\bm{s}^c\in C_{\bm{s}^c}^*} p(\bm{s}^c)\int_{\bm{s}^u\int\mathcal{S}^u}\left(p(\bm{s}^u|\bm{G}_1)-p(\bm{s}^u|\bm{G}_2)\right)d\bm{s}^c d\bm{s}^u\\=&\int_{\bm{s}^c\in C_{\bm{s}^c}^*} p(\bm{s}^c)(1-1)d\bm{s}^c=0.
\end{split}
\end{equation}
Therefore, in both cases, $T_1$ evaluates to 0 for $A_{\bm{s}^c}^*$.

Now, we address $T_2$. As discussed above, $\overline{h}_c^{-1}(A_{\bm{s}^c}^*)$ is open and non-empty. Because of the continuity of $\overline{h}_c(\cdot)$, $\forall s^{B}\in {B^{s}}^*$, there exists $r(s^{B}) \in \mathbb{R}^+$ such that $
\mathcal{B}_{r(s^B)}(s^B) \subseteq {B^s}^*$. As $p(s|\bm{G})>0$ over $(s,\bm{G})$, we have $\mathbb{P}[\{s\in {B^S}^*\}|\{\bm{G}=\bm{G}'\}]\geq \mathbb{P}[\{s \in \mathcal{B}_{r(s^B)}(s^B)\}|\{\bm{G}=\bm{G}'\}]>0$ for any $\bm{G}\in\mathcal{G}$. Assumption A4 indicates that $\exists \bm{G}_t^{1*},\bm{G}_t^{2*}$, such that 
\begin{equation}
T_2:=\int_{[{\bm{s}^c}^{\top},{s^{u}}^{\top}]^{\top}\in B_s^*}p({\bm{s}^c})\left(p({\bm{s}^u}|\bm{G}_t^{1*})-p({\bm{s}^u}|\bm{G}_t^{2*})\right)d\bm{s}^c d\bm{s}^u \neq 0.
\end{equation}
Therefore, for such $A_{\bm{s}^c}^*$, we would have $T_1+T_2\neq 0$, which leads to contradiction with Equation (\ref{equ:lam4}). We have proved by contradiction that Statement 3 is true and hence Statement 1 holds, that is, $\overline{h}_c(\cdot)$ does not depend on the user preference latent variables $s_u$.

\noindent \textbf{Step 4}. With the knowledge that $\overline{h}_c(\cdot)$ does not depend on the user preference latent variables $s^{u}$, we now show that there exists an invertible mapping between the true context variables $\bm{s}^c$ and the estimated version $\hat{s}_c$. As $\overline{h}(\cdot)$ is smooth over $\mathcal{S}$, its Jacobian can be written as:
\begin{equation}
\begin{gathered}
    {\bm{J}}_{\overline{h}}=\begin{bmatrix}
    \begin{array}{c|c}
        \textbf{A}:=\frac{\partial \hat{\bm{s}}^c}{\partial \bm{s}^c} & \textbf{B}:=\frac{\partial {\hat{s}}^c}{\partial \bm{s}^u} \\ \hline
        \textbf{C}:=\frac{\partial {\hat{s}}^u}{\partial \bm{s}^c} & \textbf{D}:=\frac{\partial {\hat{s}}^u}{\partial \bm{s}^u},
    \end{array}
    \end{bmatrix}
\end{gathered}
\end{equation}
where we use notation $\hat{\bm{s}}^c=\overline{h}(s)_{1:n_c}$ and $\hat{s}^u=\overline{h}(s)_{n_c+1:n}$. As we have shown $\hat{\bm{s}}^c$ does not depend on the user preference latent variables $\bm{s}^u$, it follows $\bm{B}=0$. On the other hand, since $h(\cdot)$ is invertible over $\mathcal{S}$, $\bm{J}_{\overline{h}}$ is non-singular. Therefore, $\bm{A}$ must be non-singular due to $\bm{B}=0$. We note that $\bm{A}$ is the Jacobian of the function $\overline{h}_c'(\bm{s}^c):=\overline{h}_c(s):\mathcal{S}^c\rightarrow\mathcal{S}^c$, meaning that $\bm{s}^c$ is component-wise identification.
\end{proof}


\bibliographystyle{plain}
\bibliography{cas-refs}


\end{document}